\newtheorem{theorem}{\color{black}\indent Theorem}
\newtheorem{lemma}{\color{black}\indent Lemma}[section]
\newtheorem{proposition}{\color{black}\indent Proposition}
\newtheorem{definition}{\color{black}\indent Definition}[section]
\newtheorem{remark}{\color{black}\indent Remark}[section]
\begin{document}
\begin{frontmatter}
\title{ Quantum Birkhoff Normal Form in the $\sigma-$Bruno-R\"{u}ssmann non-resonant condition}

\author{{ Huanhuan Yuan$^{a,}$ \footnote{ E-mail address : yuanhh128@nenu.edu.cn}
        ,~ Yixian Gao$^{a,c,}$} \footnote{E-mail address : gaoyx643@nenu.edu.cn}
		,~ Yong Li$^{b,c,}$  \footnote{E-mail address : liyong@jlu.edu.cn} \\
	{$^{a}$School of Mathematics and Statistics, Northeast Normal University,} {Changchun 130024, P. R. China.}\\
    {$^{b}$School of Mathematics, Jilin University, Changchun 130012, P. R. China.} \\
	{$^{c}$Center for Mathematics and Interdisciplinary Sciences, Northeast Normal University,}
	{Changchun, 130024, P. R. China.}
}

\begin{abstract}
 This paper constructs a quantum Birkhoff normal form in the Gevrey category for an $h$-pseudodifferential operator $P_h(t)$, where $t \in (-\tfrac{1}{2}, \tfrac{1}{2})$, in a neighborhood of a union $\Lambda$ of KAM tori. The construction proceeds from a classical Birkhoff normal form for the principal symbol and is established under a $\sigma$-Bruno--R\"{u}ssmann non-resonance condition with $\sigma > 1$.

\end{abstract}

\begin{keyword}
quantum Birkhoff normal form, $h-$pseudodifferential operator, $h-$Fourier Integral Operator, symbol, quantizing, $\sigma$-Bruno-R\"{u}ssmann condition, homological equation
\end{keyword}
\end{frontmatter}
\section{Introduction}
Birkhoff normal form is a fundamental tool in the study of Hamiltonian systems and plays a central role in classical KAM theory. For an invariant Lagrangian torus with a non-resonant frequency vector, the Birkhoff normal form provides a systematic procedure to eliminate non-essential terms through successive canonical transformations. As a result, the Hamiltonian can be reduced to a nearly integrable form, revealing the fine dynamical structure of the system in a neighborhood of the torus.
The central role of the Birkhoff normal form in classical KAM theory has been fully demonstrated in the foundational works of Kolmogorov, Arnold, and Moser \cite{MR68687},\cite{MR163025} and \cite{MR147741}.

In the classical setting, a further essential ingredient is the non-resonance condition on the frequency vector, most notably the Diophantine condition. This arithmetic assumption plays a fundamental role in both Birkhoff normal form constructions and KAM theory, as it provides effective control of small divisors. As a consequence, the normalization procedure can be carried out in a convergent manner, leading in particular to the persistence of invariant tori under sufficiently small perturbations.

In a general regularity framework, non-resonance conditions should be adapted to the function space under consideration rather than imposed as fixed assumptions. In the Gevrey setting, where Fourier coefficients exhibit only sub-exponential decay, small divisors accumulate more severely during normalization, rendering the classical Diophantine condition overly restrictive. This naturally leads to the introduction of non-resonance conditions depending on the Gevrey exponent, designed to balance arithmetic control with regularity loss. The $\sigma$-Bruno-R\"{u}ssmann condition considered in this paper follows this principle, providing effective control of small divisors within the Gevrey framework while remaining sufficiently weak. In the analytic limit, it reduces to the classical Bruno-R\"{u}ssmann condition introduced by Bruno \cite{MR377192} and developed by R\"{u}ssmann \cite{MR1843664} in KAM theory.

The construction of Birkhoff normal forms is a cornerstone of classical KAM theory, providing a precise description of Hamiltonian dynamics in a neighborhood of invariant tori. In the semiclassical regime, its quantum counterpart plays an equally fundamental role in the analysis of spectral asymptotics and the microlocal behavior of eigenfunctions. From this perspective, quantum Birkhoff normal forms may be viewed as a semiclassical analogue of the classical normal form, aiming to capture the persistence of local integrable structures after quantization. A natural question then arises: to what extent can the integrable structure revealed by the classical Birkhoff normal form be preserved at the quantum level? More precisely, if the principal symbol of a semiclassical pseudodifferential operator exhibits an integrable structure near an invariant torus, can the corresponding quantum system be simplified, via suitable unitary transformations, to an effective quantum normal form that faithfully reflects its spectral properties and quantum dynamics?

\subsection{Quantum Birkhoff Normal Forms: Motivation and Background}
In classical Hamiltonian systems, the stability of integrable structures and invariant tori is central to understanding dynamical behavior. At the quantum level, a natural question arises: to what extent are these locally integrable properties, revealed by the classical Birkhoff method, preserved after quantization? A series of related semiclassical phenomena, such as the construction of quasi-modes, the clustering and splitting of energy levels, and the localization of eigenfunctions in phase space, all indicate the need for a quantum-level tool to characterize the influence of classical integrable structures on quantum systems.

The quantum Birkhoff normal form was introduced in this context. Its basic idea is to simplify the original quantum Hamiltonian operator near the invariant torus into a "quantum normal form operator" through appropriate unitary transformations (usually generated by quasi-differential operators or Fourier integral operators). The dominant sign of this normal form operator depends only on the action variable, thus corresponding to the integrable Hamiltonian in the classical Birkhoff normal form. Using the quantum Birkhoff normal form, the spectral asymptotic properties of quantum systems, the microlocal distribution of eigenfunctions, and the stability and clustering structure of energy levels under small perturbations can be effectively analyzed. Therefore, it constitutes an important bridge connecting classical integrability and quantum dynamics.

The quantum analogue of the Birkhoff normal form was introduced as a powerful tool in semiclassical analysis to investigate spectral properties and the microlocal structure of eigenfunctions near invariant tori. Popov's \cite{MR1770800} landmark work saw him systematically establish quantum Birkhoff normal form theory for the first time within a quantum framework, under rigorous constraints of the Diophantine condition and Gevrey regularity. This laid a rigorous analytical foundation for quantum KAM theory, proving that regularity can be well preserved at the quantum level. Related developments and refinements have since appeared in various contexts, such as the study of spectral asymptotics, quasi-modes, and quantum localization phenomena near invariant tori (see for example, \cite{MR0501196,MR3743700},\cite{MR1724855}). These works demonstrate that quantum Birkhoff normal forms provide an effective bridge between classical integrable structures and their quantum manifestations.

Recent developments in quantum Birkhoff normal form (QBNF) theory have underscored its central role in semiclassical analysis, particularly in the study of quantum systems near invariant tori under non-resonant conditions. As a quantum counterpart of the classical Birkhoff normal form, the QBNF provides a systematic procedure to simplify semiclassical operators via unitary conjugations, thereby revealing the underlying integrable structure at the quantum level. These advances, relying on techniques from pseudodifferential and Toeplitz operator theory, have significantly deepened our understanding of semiclassical spectral asymptotics and the microlocal behavior of quantum dynamics.

Effective constructions of QBNF have led to precise semiclassical spectral asymptotics \cite{MR2463493} and have been successfully applied to the analysis of discrete spectra near nondegenerate potential wells, yielding uniform energy estimates \cite{MR2423760}. More generally, quantum normal form techniques provide a robust framework for reducing complex quantum Hamiltonians to model operators that depend only on action variables, thereby facilitating a detailed analysis of their spectral properties. Related applications include the microlocal factorization of sub-Riemannian Laplacians by Colin de Verdi`{e}re et al. \cite{MR3743700}, as well as the construction of Birkhoff normal forms for elliptic Fourier integral operators in the semiclassical regime \cite{MR1914459}. For general background on semiclassical analysis and normal form methods, we refer to \cite{MR2952218}.

Beyond these developments, quantum Birkhoff normal forms have also been investigated in various specific models. Imekraz \cite{MR2860609} established QBNFs for semilinear quantum harmonic oscillators under analytic regularity and classical non-resonance assumptions. In the classical setting, related normal form techniques have been employed by Liang and Li \cite{MR4853426} to analyze resonance structures and twist coefficients in nonlinear oscillatory systems. These works collectively highlight the versatility of normal form methods and further motivate a systematic study of QBNFs under weaker regularity and non-resonance conditions.

Motivated by these considerations, we introduce in the next subsection the class of semiclassical pseudodifferential operators under consideration, together with the general assumptions required for the construction of the quantum Birkhoff normal form.

\subsection{General assumptions}

The primary objective of this paper is to construct the quantum Birkhoff normal form (QBNF) for the $h$-differential operator $P_h(t)$ (defined as \eqref{bg}). To that end, we first introduce the analytic and geometric setup of the problem.

Let $M$ be either a compact real analytic manifold of dimension $n \ge 2$ or $\mathbb{R}^n$. We denote by $\Omega^{\frac{1}{2}}$ the bundle of half-densities on $M$, which is the square root of the density bundle $\Omega$. A smooth section $u \in C^\infty(M, \Omega^{\frac{1}{2}})$ can be locally expressed as $u(x) |\mathrm{d}x|^{\frac{1}{2}}$, where $u \in C^\infty(M)$ and $|\mathrm{d}x|^{\frac{1}{2}}$ transforms under coordinate changes by the square root of the absolute value of the Jacobian determinant. The space of half-densities is particularly natural when considering formally self-adjoint differential or pseudodifferential operators, as it allows for coordinate-invariant integration without the need to specify a Riemannian metric.

We consider a formally self-adjoint $h$-pseudodifferential operator $P_{h}(t)$ of finite order $m$, acting on half-densities in $C^{\infty}(M,\Omega^{\frac{1}{2}})$, of the form
\begin{align}\label{bg}
P_{h}(t):=P_{h}(x,hD;t)=\sum_{j=0}^{m}P_{j}(x,hD;t)h^{j}.
\end{align}
Here, each $P_{j}(x,\xi;t)$ is a polynomial in $\xi$ with analytic coefficients, and $D=(D_{j})_{j=1}^{n}$, where $D_{j}=-{\rm i}\frac{\partial}{\partial x_{j}}$. In particular, for $j=0$, $P_{0}(x,\xi;t)$ represents the principal symbol of $P_{h}(t)$, denoted by $H(x,\xi;t)=P_{0}(x,\xi;t)$, for $(x,\xi;t)\in T^{*}M \times(-\frac{1}{2},\frac{1}{2})$. For $j=1$, $P_{1}(x,\xi;t)$ is the subprincipal symbol, although we generally set $P_{1}(x,\xi;t)=0$.

A canonical example is the semiclassical Schr\"{o}dinger operator
\begin{align}
P_{h}=-h^{2}\triangle +V(x),
\end{align}
where $\triangle$ is the Laplace--Beltrami operator on $M$, and $V(x)$ is a real analytic potential on $M$ bounded from below.
It is then straightforward to verify that the symbol of $P_{h}$ is $p(x,\xi)=|\xi|^{2}+V(x)$ for $x\in M$, which is elliptic.

In this work, we aim to construct a Gevrey QBNF for $P_{h}(t)$, with $t\in(-\frac{1}{2},\frac{1}{2})$,
around the union $\Lambda$ of KAM tori.
This construction starts from a suitable Birkhoff normal form of $H$ near $\Lambda$, under a generalized non-resonance condition known as the $\sigma$-Bruno-R\"{u}ssmann condition (see Definition \ref{i}).

The main contribution of this paper is to construct quantum Birkhoff normal form construction to a class of weakly non-resonant frequencies satisfying a $\sigma$-Bruno-R\"{u}ssmann condition. More precisely:
\begin{itemize}
  \item We introduce a $\sigma$-Bruno-R\"{u}ssmann type arithmetic condition allowing logarithmic control of small divisors. See section \ref{sec2.2} for details.
  \item We prove that the quantum Birkhoff normal form can still be constructed for Gevrey regular symbols, without loss of Gevrey regularity, despite the weaker arithmetic assumptions.
  \item The analysis is carried out in the presence of an external parameter $t$, yielding uniform estimates with respect to this parameter.
\end{itemize}

\subsection{Main result }

This work establishes a quantum Birkhoff normal form (QBNF) for the family of semiclassical operators $P_h(t)$ in a neighborhood of the KAM invariant tori $\Lambda$. Our construction builds upon the classical Birkhoff normal form of the Hamiltonian $H$ near $\Lambda$, which we now briefly review.

For a Hamiltonian $H\in \mathcal{G}^{\rho,\rho,1}(\mathbb{T}^{n}\times D\times(-\frac{1}{2},\frac{1}{2}))$ (see Definition \eqref{a12}), assume that there exists a real analytic exact symplectic diffeomorphism
\begin{align*}
\chi_{t}^{1}:\mathbb{T}^{n}\times D\rightarrow U\subset T^{*}M,
\end{align*}
where $D$ is an open subdomain in $\mathbb{R}^{n}$, such that the transformed Hamiltonian
$
\hat{H}(\varphi,I):=(H\circ\chi^{1}_{t})(\varphi,I)
$
admits a $\mathcal{G}^{\rho,\rho+1,\rho+1}$-Birkhoff normal form around a family of invariant tori with frequencies in a suitable $\Omega_{\sigma}$.
That is, there exists an exact symplectic transformation $\chi_{t}^{0}\in \mathcal{G}^{\rho,\rho+1}(\mathbb{T}^{n}\times D,\mathbb{T}^{n}\times D)$ with a generating function $\Phi\in\mathcal{G}^{\rho,\rho+1,\rho+1}(\mathbb{T}^{n}\times D\times(-\frac{1}{2},\frac{1}{2}))$, such that
\[\hat{H}(\chi^{0}_{t}(\varphi,I))=K_{0}(I;t)+R_{0}(\varphi,I;t), \quad \text{in}~ \mathbb{T}^{n}\times D\times\left(-\frac{1}{2},\frac{1}{2}\right),\]
where $K_{0}(I;t)\in\mathcal{G}^{\rho+1,\rho+1}(D\times(-\frac{1}{2},\frac{1}{2}))$ (see Definition \eqref{at}) and $R_{0}\in \mathcal{G}^{\rho,\rho+1,\rho+1}(\mathbb{T}^{n}\times D\times(-\frac{1}{2},\frac{1}{2}))$ satisfy $D_{I}^{\alpha}R_{0}(\varphi,I;t)=0$ and $D_{I}^{\alpha}(\nabla K_{0}(I;t)-\omega(I;t))=0$ for any $(\varphi,I;t)\in \mathbb{T}^{n}\times \omega^{-1}(\Omega_{\sigma};t)\times(-\frac{1}{2},\frac{1}{2})$. Additionally, the generating function $\Phi\in\mathcal{G}^{\rho,\rho+1,\rho+1}(\mathbb{T}^{n}\times D\times(-\frac{1}{2},\frac{1}{2}))$ for $\chi^{0}_{t}$ satisfies
\[\|Id -\Phi_{\theta I}(\varphi,I;t)\|\leq \varepsilon \quad \text{in}~ \mathbb{T}^{n}\times D\times\left(-\frac{1}{2},\frac{1}{2}\right)\]
for some $0<\varepsilon<1$, as stated in \cite{MR1770799} for the case without the variable $t$. Here, we denote by $\Phi_{\theta I}$ the matrix of second-order partial derivatives of $\Phi$ with respect to $\theta$ and $I$.

We start with the classical Birkhoff normal form of the Hamiltonian $H$, which is conjugated to a normal form of the class $\mathcal{G}^{\rho,\rho+1,\rho+1}$ through Gevrey symplectic transformations in the neighborhood of $\Lambda$.
On this basis, we construct a family of unitary Fourier integral operators $U_{h}(t)$, which are microlocally related to the classical transformation $\chi_{t}=\chi^{0}_{t}\circ\chi^{1}_{t}$, and whose graphs define the canonical relation of $U_{h}(t)$ (see Definition~\ref{bu}).
Here, $U_{h}(t)$ is constructed based on the quantization of $\chi_{t}$.
The conjugate operator $U^{*}_{h}(t)\circ P_{h}(t)\circ U_{h}(t)$ has quantum Birkhoff normal form in the class of $h$-pseudodifferential operators on $L^{2}(\mathbb{T}^{n},\mathbb{L})$, where $\mathbb{L}$ denotes the flat Hermitian line bundle over $\mathbb{T}^{n}$. Its full symbol lies in $p(\varphi,I;t,h) \in \mathcal{S}_{l}(\mathbb{T}^{n}\times D\times (-\frac{1}{2},\frac{1}{2}))$, where $l=(\sigma,\mu,\lambda,\bar{\rho})$ (see Definition~\ref{bv}).

The self-adjoint $h$-differential operator $P_{h}(t)$ given by \eqref{bg} acts on half densities in $C^{\infty}(M,\Omega^{\frac{1}{2}})$ associated with the principal symbol $H$ and vanishing subprincipal symbol. We assert that Theorem~\ref{ac} holds under the $\sigma$-Bruno-R\"{u}ssmann condition, as shown below:

\begin{theorem}\label{ac}

Let $H \in \mathcal{G}^{\rho,\rho,1}$ be a Hamiltonian. Suppose there exists a real analytic symplectic transformation
$\chi^{1}_{t} : \mathbb{T}^{n}\times D \rightarrow U \subset T^{*}M$ such that the transformed Hamiltonian
$\hat{H} = H(\chi^{1}_{t}(\varphi,I))$, for $(\varphi,I) \in \mathbb{T}^{n}\times D$, admits a Birkhoff normal form via
an exact symplectic transformation $\chi_{t}^{0}$. Then there exists a family of unitary $h$-Fourier integral operators
\[
U_h(t) : L^2(\mathbb{T}^n; \mathbb{L}) \rightarrow L^2(M), \quad \text{for } 0 < h \leq h_0,
\]
associated with the canonical relation graph, such that:
\[
P_h(t) \circ U_h(t) = U_h(t) \circ P_h^0(t),
\]
where $P_h^0(t)$ is an $h$-pseudodifferential operator with full symbol
\[
p^0(\varphi,I;t,h) = K^0(I;t,h) + R^0(\varphi,I;t,h),
\]
and
\[
K^0(I;t,h) = \sum_{0 \leq j \leq \eta h^{-1/\rho}} K_j(I;t) h^j, \quad
R^0(\varphi,I;t,h) = \sum_{0 \leq j \leq \eta h^{-1/\rho}} R_j(\varphi,I;t) h^j
\]
for some constant $\eta > 0$. Both $K^0$ and $R^0$ belong to the Gevrey class
$\mathcal{S}_l(T^{*}\mathbb{T}^n\times (-\frac{1}{2},\frac{1}{2}))$ with $K^0$ real-valued. Moreover,
$R^0$ vanishes to infinite order on the Cantor set $\mathbb{T}^n\times E_{\kappa}(t)$, where
$E_{\kappa}(t) = \omega^{-1}(\Omega_{\sigma};t)$ denotes the set of nonresonant actions for
$t \in (-\frac{1}{2},\frac{1}{2})$.
\end{theorem}

\begin{remark}
Although the resulting quantum Birkhoff normal form is formally analogous to that obtained under Diophantine conditions, the underlying proof strategy is fundamentally different. In the present setting, the control of small divisors relies crucially on the stretched sub-exponential decay associated with the $\sigma$-Bruno-R\"{u}ssmann condition, rather than on polynomial Diophantine bounds.
\end{remark}

Building on Popov's seminal work \cite{MR1770800}, we consider a natural extension of the quantum Birkhoff normal form framework by replacing the Diophantine non-resonance assumption with a class of $\sigma$-approximation functions satisfying suitable monotonicity and integrability conditions. Our analysis is carried out in a broader Gevrey regularity class $\mathcal{G}^{\rho,\rho,1}$ and allows for analytic dependence on a time parameter $t$. Moreover, by exploiting the subexponential decay of Fourier coefficients inherent to the Gevrey setting, our approach avoids truncation and provides direct control of small divisors, enabling the construction of the quantum Birkhoff normal form under non-resonance conditions weaker than the Diophantine case.

Once a Gevrey quantum Birkhoff normal form is established, it is expected to provide a natural starting point for the construction of Gevrey quasimodes and for further spectral analysis near invariant tori, as in earlier works under stronger non-resonance assumptions. Interested readers can refer to the reference \cite{MR333485} (see also \cite{MR1239173}), \cite{MR0501196},\cite{MR3746630} and \cite{MR3742467}.

\subsection{Outline of the paper}

This paper is organized as follows.

In Section \ref{sec2}, we define the classes of $h$-pseudodifferential and $h$-Fourier integral operators employed throughout, along with the relevant Gevrey symbol classes. We introduce a new family of non-resonance conditions based on $\sigma$-approximation functions, generalizing classical Bruno-R\"{u}ssmann conditions. A key technical tool is the supremum function
\[
\Gamma_s(\eta) \triangleq \sup_{Q \geq 0} \, (1+Q)^s \Delta(Q) e^{-\eta Q^{\frac{1}{\sigma}}},
\]
which remains bounded under our assumptions. This illustrates a fundamental mechanism: the subexponential decay inherent to Gevrey regularity compensates for losses induced by weak non-resonance conditions, enabling control of small divisors and solution of the cohomological equation without truncation.

Section \ref{sec3} constructs the microlocally unitary $h$-FIO $U_h(t)$ associated with the classical Birkhoff normal form transformations by quantizing the canonical maps $\chi_t^1$ and $\chi_t^0$.

Section \ref{sec4} is devoted to the proof of Theorem \ref{ac}, establishing the quantum Birkhoff normal form for $P_h(t)$ near a Gevrey family of KAM tori. The construction involves conjugating $P_h(t)$ to an $h$-pseudodifferential operator $\widetilde{P}_h(t)$ whose principal symbol is in classical Birkhoff normal form, and solving a cohomological equation over a Cantor set of non-resonant actions using Gevrey Whitney extensions.

Finally, Section \ref{sec5} completes the proof of Theorem \ref{ad}, building upon the conjugation procedures and Fourier estimates developed in preceding sections.

\subsection{Related literature}

In quantum mechanics, the evolution of physical systems is governed by the Schr\"{o}d-\\inger equation, which requires analytical tools fundamentally different from those of classical dynamics.
A significant body of work on quantum Birkhoff normal forms has been developed by Popov (see \cite{MR2111816}--\cite{MR1770800}). Employing Gevrey-class techniques, he constructed a quantum Birkhoff normal form (QBNF) for a class of semiclassical differential operators $P_{h}$ with principal symbol $p(x,\xi)$ and vanishing subprincipal symbol. This construction produced quasimodes localized near nonresonant invariant tori, with errors that are exponentially small in the semiclassical parameter $h$ (see \cite{MR2111816}).

More recently, Gomes has made notable contributions to the study of quantum scarring and the failure of quantum ergodicity. In \cite{MR4404789}, he showed that on smooth, compact surfaces, for a generic one-parameter family of Hamiltonians and almost all KAM tori $\Lambda_{\omega}$ (with Diophantine frequency $\omega$), there exists a semiclassical measure with positive mass supported on these invariant tori for almost every $t\in(0,\delta)$. In a subsequent work \cite{MR4578527}, he further proved that, under generic conditions, the quantization of such one-parameter families fails to be quantum ergodic for a full-measure set of parameters $t\in(0,\delta)$.

\section{Preliminaries} \label{sec2}
\subsection{h-PDOs, h-FIOs,  Gevrey symbol classes}

Pseudodifferential operators furnish a systematic framework that unifies the treatment of differential and integral operators, fundamentally grounded in the Fourier transform \(\mathcal{F}\) and its inverse \(\mathcal{F}^{-1} = \mathcal{F}^*\). These linear operators are represented by their symbols---functions on phase space that act as generalized Fourier multipliers. The resulting class of pseudodifferential operators forms an algebra, wherein operations such as composition, transpose, and adjoint can be analyzed through algebraic and asymptotic computations on the corresponding symbols. This symbolic calculus reveals a deep unity between differential and integral operations, providing a powerful perspective in linear analysis. Consequently, pseudodifferential operators have become an indispensable tool in both pure and applied mathematics, especially in the theory of partial differential equations and harmonic analysis.

We now introduce the relevant class of Gevrey symbols (for further details, see \cite{MR4404789}). Let \(D_{0}\) be a bounded domain in \(\mathbb{R}^{n}\). Fix parameters \(\sigma,\mu,\lambda>1\) and \(\varrho\geq\mu+\sigma+\lambda-1\), and set \(l=(\sigma,\mu,\lambda,\varrho)\). We define a class of formal Gevrey symbols, denoted by \(F\mathcal{S}_{l}(\mathbb{T}^{n}\times D_{0})\), as follows. Consider a sequence of smooth functions \(p_{j}\in C_{0}^{\infty}(\mathbb{T}^{n}\times D_{0})\) for \(j\in\mathbb{Z}_{+}\), each supported in a fixed compact subset of \(\mathbb{T}^{n}\times D_{0}\).

\begin{definition}
A formal series
\begin{align}\label{a1}
\sum_{j=0}^{\infty} p_{j}(\varphi, I; t) h^{j}
\end{align}
is said to belong to the class \(F\mathcal{S}_{l}(\mathbb{T}^{n}\times D_{0})\) if there exists a constant \(C>0\) such that for every multi-index \(\alpha, \beta, \delta \in \mathbb{N}^{n}\) and every integer \(j \ge 0\), the estimate
\begin{align*}
\sup_{(\varphi, I) \in \mathbb{T}^{n} \times D_{0}}
\bigl| \partial_{I}^{\alpha} \partial_{\varphi}^{\beta} \partial_{t}^{\delta}
p_{j}(\varphi, I; t) \bigr|
\leq
C^{\,j + |\alpha| + |\beta| + |\delta| + 1}
\, \alpha!^{\mu} \, \beta!^{\sigma} \, \delta!^{\lambda} \, j!^{\varrho}
\end{align*}
holds uniformly in \(t\) over its domain.
\end{definition}

\begin{definition}
A function \(p(\varphi, I; t, h)\), defined for \((\varphi, I) \in \mathbb{T}^{n} \times \mathbb{R}^{n}\), is called a \emph{realization} of the formal symbol~\eqref{a1} in \(\mathbb{T}^{n} \times D_{0}\) if the following conditions hold:
\begin{enumerate}
    \item For each fixed \(0 < h \leq h_{0}\), the function \(p(\cdot, \cdot; t, h)\) is smooth in \((\varphi, I)\) and has compact support contained in \(\mathbb{T}^{n} \times D_{0}\).

    \item There exists a constant \(C > 0\) such that for all multi-indices \(\alpha, \beta, \delta \in \mathbb{N}^{n}\), all integers \(N \geq 0\), and all \(h \in (0, h_{0}]\),
    \begin{align*}
    \sup_{(\varphi, I) \in \mathbb{T}^{n} \times D_{0}}
    \Bigl| \partial_{I}^{\alpha} \partial_{\varphi}^{\beta} \partial_{t}^{\delta}
    &\Bigl( p(\varphi, I; t, h) - \sum_{j=0}^{N} p_{j}(\varphi, I) h^{j} \Bigr) \Bigr| \\
    & \leq h^{N+1} \,
    C^{\, N + |\alpha| + |\beta| + |\delta| + 2}
    \, \beta!^{\sigma} \, \alpha!^{\mu} \, \delta!^{\lambda} \, (N+1)!^{\varrho}.
    \end{align*}
\end{enumerate}
\end{definition}

For example,  when the dependence on the variable $t$ is omitted, a realization can be constructed via the truncated series
\[
p(\varphi, I; h)=\sum_{j \leq \varepsilon h^{-1/\bar{\rho}}} p_{j}(\varphi, I)\, h^{j},
\]
where the constant $\varepsilon>0$ depends only on the dimension $n$ and the constant $C$ appearing in the symbol estimates (for the classical case $\sigma=\mu=\varrho=1$, see~\cite{MR1188076}, Sect.1).
We denote by $\mathcal{S}_{l}(\mathbb{T}^{n}\times D_{0})$ the resulting class of symbols, where now
\( l=(\sigma,\mu,\varrho) \) and \(\varrho>\sigma+\mu-1\).

Given two symbols \(p,q\in\mathcal{S}_{l}(\mathbb{T}^{n}\times D_{0})\), their composition
\(p\circ q\in\mathcal{S}_{l}(\mathbb{T}^{n}\times D_{0})\) is defined as the realization of the formal symbol
\[
\sum_{j=0}^{\infty}c_{j}h^{j}\;\in\;F\mathcal{S}_{l}(\mathbb{T}^{n}\times D_{0}),
\]
where the coefficients are given by
\[
c_{j}(\varphi, I; t)=\sum_{\substack{r,s\ge0,\;\gamma\in\mathbb{N}^{n}\\ r+s+|\gamma|=j}}
\frac{1}{\gamma!}\;
\partial_{I}^{\gamma}p_{r}(\varphi,I;t)\;
\partial_{\varphi}^{\gamma}q_{s}(\varphi,I;t).
\]
For a symbol \(p \in \mathcal{S}_{l}(\mathbb{T}^{n} \times D_{0})\), we define the corresponding \(h\)-pseudodifferential operator ($h$-PDO) as follows.

\begin{definition}\label{bv}
Given a symbol \(p \in \mathcal{S}_{l}(\mathbb{T}^{n} \times D_{0})\), the associated \(h\)-pseudodifferential operator \(P_{h}(t)\) acts on a  function \(u \in C^{\infty}_{0}(\mathbb{T}^{n})\) by
\[
\bigl(P_{h}(t)u\bigr)(x) = (2\pi h)^{-n}
\int_{\mathbb{T}^{n}}\int_{\mathbb{R}^{n}}
e^{{\rm i}(x-y)\cdot\xi/h} \,
p(x,\xi;t,h) \,
u(y) \,
\mathrm{d}\xi\,\mathrm{d}y,
\qquad x \in \mathbb{T}^{n},
\]
where the symbol \(p\) is extended by zero outside \(D_{0}\) in the \(\xi\)-variable.
\end{definition}

\begin{definition}
The \emph{residual symbol class} \(\mathcal{S}_{l}^{-\infty}\) is defined as the set of all symbols
\(p \in \mathcal{S}_{l}(\mathbb{T}^{n} \times D_{0})\) that are realizations of the zero formal symbol,
i.e., the formal symbol \(\sum_{j=0}^{\infty} p_{j} h^{j}\) with \(p_{j} \equiv 0\) for every \(j\).
\end{definition}

\begin{remark}
In the Gevrey class \(\mathcal{S}_{l}\), residual symbols exhibit \emph{exponential decay} in \(h\), which is a much stronger property than the rapid decay  characteristic of residual symbols in the standard Kohn--Nirenberg calculus. This strengthening is a direct consequence of the Gevrey-type estimates and the truncation method used to construct realizations.
\end{remark}

We now define $h$-Fourier integral operators with parameter dependence, which play a crucial role in subsequent sections.
\begin{definition}\label{bu}
Let \(X \subset \mathbb{R}^{n}\) and \(Y \subset \mathbb{R}^{m}\) be open sets, and let
\(\phi(x,y,\xi;t)\) be a real-valued phase function defined on \(X \times Y \times (\mathbb{R}^{N} \setminus \{0\})\) that is positively homogeneous of degree \(1\) in \(\xi\) and smooth in all variables for \(\xi \neq 0\). Given a symbol \(f \in \mathcal{S}_{l}(X \times Y \times \mathbb{R}^{N})\), the associated \(h\)-Fourier integral operator is defined by
\[
\bigl(F_{h}(t)u\bigr)(x) = (2\pi h)^{-n}
\int_{Y}\int_{\mathbb{R}^{N}}
e^{{\rm i}\phi(x,y,\xi;t)/h} \,
f(x,y,\xi;t,h) \,
u(y) \,
\mathrm{d}\xi\,\mathrm{d}y,
\qquad x \in X,
\]
for \(u \in C^{\infty}_{0}(Y)\).
\end{definition}

Note that when the phase function takes the form \(\phi(x,y,\xi;t) = \langle x-y, \xi \rangle\),
the operator \(F_h(t)\) reduces to an \(h\)-pseudodifferential operator.
Thus, the class of Fourier integral operators strictly contains the class of pseudodifferential operators.
For a comprehensive treatment of Fourier integral operators, we refer the reader to
\cite{MR1996773}--\cite{MR4248008} and the references therein.

\subsection{Approximation function}\label{sec2.2}
In his work on small divisor problems \cite{MR0612810}, R\"{u}ssmann introduced the notion of an \emph{approximation function} to control a broad class of small denominators, ensuring the convergence of KAM iterations. Following his terminology, we refer to such functions as $\sigma$-\emph{approximation functions}. Remarkably, R\"{u}ssmann's approach does not rely on iterative techniques but rather on sophisticated optimization methods; for a concise exposition in the case $\sigma=1$, see \cite{MR879908}.

We now extend this concept to the Gevrey setting.

\begin{definition}
[$\sigma$-approximation function]\label{def:sigma-approx}
Let $\sigma>1$. A nondecreasing function $\Delta \colon [0,\infty) \to [1,\infty)$ is called a \emph{$\sigma$-approximation function} if it satisfies the following conditions:
\begin{enumerate}
    \item $\displaystyle \frac{\log \Delta(Q)}{Q^{1/\sigma}} \searrow 0 \quad \text{as } Q \to \infty$,
    \item $\displaystyle \int_0^{\infty} \frac{\log \Delta(Q)}{Q^{1+\frac{1}{\sigma}}}\, \mathrm{d}Q < \infty$.
\end{enumerate}
We also impose the normalization $\Delta(0)=1$.
\end{definition}

Typical examples of $\sigma$-approximation functions include:
\begin{itemize}
    \item $\Delta(Q) = (1+Q)^n$ for $n \ge 1$;
    \item $\Delta(Q) = \exp\!\bigl(\frac{Q^a}{a}\bigr)$ for $0 < a < \frac{1}{\sigma}$;
    \item $\Delta(Q) = \exp\!\Bigl(\frac{Q^{1/\sigma}}{1+\log^{\gamma}(1+Q)}\Bigr)$ for $\gamma > 1$.
\end{itemize}

We establish the primary assumption on the frequency $\omega$.
Given a bounded domain $\Omega\subset\mathbb{R}^{n}$, let $\Omega_{\sigma}\subset\mathbb{R}^{n}$ be a frequency set consisting of strongly non-resonant frequencies. That is, each $\omega\in \Omega_{\sigma}$ satisfies
\begin{align}\label{i}
|\langle k,\omega\rangle|\geq\frac{\kappa}{\Delta(|k|)},\quad 0\neq k\in\mathbb{Z}^{n}, \quad  |k|=|k_{1}|+|k_{2}|+\cdots+|k_{n}|
\end{align}
for some $\kappa>0$ and a $\sigma$-approximation function $\Delta$, which is a continuous, strictly increasing, unbounded function $\Delta:[0,\infty)\rightarrow[1,\infty)$ satisfying
\begin{align}\label{yy}
\frac{\log\Delta(Q)}{Q^{\frac{1}{\sigma}}}\searrow 0,\quad \text{as}~ Q\rightarrow\infty,
\end{align}
and
\begin{align}\label{zz}
\int_{\varsigma}^{\infty}\frac{\log\Delta(Q)}{Q^{1+\frac{1}{\sigma}}} {\rm d}Q<\infty,\quad \sigma>1,
\end{align}
where $\varsigma$ is a positive constant. If $\omega$ satisfies \eqref{i},\eqref{yy}, and \eqref{zz}, we say that $\omega$ satisfies the $\sigma$-Bruno-R\"{u}ssmann condition. It is straightforward to verify that $\Omega_{\sigma}$ shrinks as $\sigma$ increases. For example, if $\Delta(Q)=\exp(\frac{Q^{a}}{a})$, then $\omega\in \Omega_{\sigma}$ if and only if $a<\frac{1}{\sigma}$.

Let $\Omega_{\tau}$ be the set of $\tau$- Diophantine vectors $(\tau>n-1)$,  where $\Delta(|k|)\leq |k|^{\tau}$ for all $0\neq k\in\mathbb{Z}^{n}$. The set  $\Omega_{\tau}$ is  non-empty and has full measure if $\tau>n-1$. By definition, we have $\Omega_{\tau}\subset\Omega_{\sigma}$ (see \cite{MR4050197}). Therefore, $\Omega_{\sigma}$ is non-empty.

To quantify the effect of small divisors encoded by an approximation function \(\Delta\), we introduce an auxiliary function \(\Gamma_s\) derived from \(\Delta\).
Given a \(\sigma\)-approximation function \(\Delta\), define for each \(s \ge 1\) and \(\eta > 0\) the function
\[
\Gamma_s(\eta) = \sup_{Q \ge 0} \; (1+Q)^s \, \Delta(Q) \, e^{-\eta Q^{1/\sigma}} .
\]
Notice that if \(\Delta\) itself is a \(\sigma\)-approximation function, then for any \(s \ge 1\) the function \(Q \mapsto (1+Q)^s \Delta(Q)\) is again a \(\sigma\)-approximation function; in particular, both \((1+Q)^s\) and \((1+Q)^s\Delta(Q)\) satisfy the defining conditions \eqref{yy} and \eqref{zz} for a \(\sigma\)-approximation function.
The supremum in the definition of \(\Gamma_s(\eta)\) is finite for every \(\eta > 0\) because of the decay condition \eqref{yy}.

Motivated by the interplay between arithmetic conditions and regularity, we work in the Gevrey class
\begin{align*}
\mathcal{G}^{\rho,\rho,1}(\mathbb{T}^{n}\times D\times (-\frac{1}{2},\frac{1}{2})),
\end{align*}
and introduce a corresponding class of non-resonance conditions adapted to this regularity framework.

In the analytic setting, the classical Bruno condition is formulated in terms of an approximation function $\Delta(Q)$ satisfying
\begin{align*}
\int_{1}^{\infty}\frac{\log\Delta(Q)}{Q^{2}} {\rm d}Q<\infty,
\end{align*}
which ensures convergence of the normalization procedure by balancing small divisors with exponential Fourier decay.

In our Gevrey context, this condition is naturally generalized to
\begin{align*}
\int_{\varsigma}^{\infty} \frac{\log \Delta(Q)}{Q^{1 + \frac{1}{\sigma}}} \, \mathrm{d}Q < \infty, \quad \sigma > 1.
\end{align*}
This $\sigma$-Bruno-R\"{u}ssmann type condition provides a quantitative arithmetic control precisely matched to the Gevrey regularity, allowing the accumulation of small divisors to be compensated by the available decay.

A key point in our approach is that we do not perform Fourier truncation when solving the cohomological equation \eqref{h}. To ensure solvability, we instead impose a mild monotonicity condition:
\begin{align*}
\frac{\log\Delta(Q)}{Q^{\frac{1}{\sigma}}}\searrow 0,\quad \text {as}~ 0<Q\rightarrow\infty,
\end{align*}
which allows the approximation function $\Delta(Q)$ to grow faster than any polynomial, while still being effectively balanced by the subexponential Fourier decay characteristic of the Gevrey class.
This compensation mechanism is encoded in a crucial estimate on the auxiliary function $\Gamma_{s}(\eta)$,
which remains bounded under our assumptions.
This interplay between Gevrey regularity and weakened frequency conditions is a central contribution of the present work. See Section \ref{sec2.2} for details.

Given \(\eta > 0\), let \(\{\eta_\nu\}_{\nu \ge 0}\) be a nonincreasing sequence of positive numbers such that \(\sum_{\nu\ge0} \eta_\nu \le \eta\).
Working with the supremum in the definition of \(\Gamma_s(\eta)\) along such a sequence yields the following estimate:

\begin{lemma}\label{a7}
Let $1 < \kappa \leq 2$ and $T \geq \varsigma$. If
\[
\frac{1}{\log \kappa} \int_T^{\infty} \frac{\log \Delta(Q)}{Q^{1 + 1/\sigma}}  {\rm d}Q \leq \eta,
\]
then
\[
\Gamma_s(\eta) \leq e^{\eta(s) T^{1/\sigma}},
\]
where $\eta(s)$ denotes a constant depending on $s$ and the given $\eta$.
\end{lemma}

\begin{proof}
Define
\[
\delta_s(Q) = \log\bigl((1+Q)^s \Delta(Q)\bigr) = \log\Delta(Q) + s\log(1+Q),
\]
and for each \(\nu \ge 0\) set
\[
Q_\nu = \kappa^\nu T, \qquad
\eta_\nu = \frac{\delta_s(Q_\nu)}{Q_\nu^{1/\sigma}} .
\]
Since the function \(1+Q\) is itself a \(\sigma\)-approximation function, we may write
\[
\eta_\nu = \wp_\nu + s c_\nu,
\qquad
\wp_\nu = \frac{\log\Delta(Q_\nu)}{Q_\nu^{1/\sigma}}, \quad
c_\nu = \frac{\log(1+Q_\nu)}{Q_\nu^{1/\sigma}} .
\]
By condition \eqref{yy} the sequence \(\{\eta_\nu\}\) is nonincreasing and positive.

Using the integral constraint \eqref{zz}, and the substitution $Q = \kappa^\nu T$, we estimate
\[
\sum_{\nu \geq 0} \wp_\nu \leq \int_0^\infty \frac{\log \Delta(Q_\nu)}{Q_\nu^{1/\sigma}}  {\rm d}\nu
\leq \frac{1}{\log \kappa} \int_T^\infty \frac{\log \Delta(Q)}{Q^{1 + 1/\sigma}}  {\rm d}Q \leq \wp,
\]
and similarly,
\[
\sum_{\nu \geq 0} c_\nu \leq \int_0^\infty \frac{\log(1 + Q_\nu)}{Q_\nu^{1/\sigma}}  {\rm d}\nu
\leq \frac{1}{\log \kappa} \int_T^\infty \frac{\log(1 + Q)}{Q^{1 + 1/\sigma}}  {\rm d}Q \leq c.
\]
Thus,
\[
\sum_{\nu \geq 0} \eta_\nu \leq \int_0^\infty \frac{\delta_s(Q_\nu)}{Q_\nu^{1/\sigma}}  {\rm d}\nu
\leq \frac{1}{\log \kappa} \int_T^\infty \frac{\delta_s(Q)}{Q^{1 + 1/\sigma}}  {\rm d}Q \leq \eta,
\]
where $\eta = \wp + s c$.

Since \(\delta_s(Q)/Q^{1/\sigma}\) is nonincreasing by condition~\eqref{yy}, we have
\(\delta_s(Q) - \eta_\nu Q^{1/\sigma} \le 0\) for all \(Q \ge Q_\nu\).
On the interval \([0, Q_\nu]\), the function \(Q \mapsto \delta_s(Q) - \eta_\nu Q^{1/\sigma}\) attains a maximum, and because \(\delta_s(Q) \le \delta_s(Q_\nu)\) for \(Q \le Q_\nu\), this maximum does not exceed \(\delta_s(Q_\nu)\). Consequently,
\[
\Gamma_s(\eta_\nu) = \sup_{Q \ge 0} \exp\!\bigl(\delta_s(Q) - \eta_\nu Q^{1/\sigma}\bigr)
\le e^{\delta_s(Q_\nu)} = e^{\eta_\nu Q_\nu^{1/\sigma}},
\]
where the last equality uses the definition \(\eta_\nu = \delta_s(Q_\nu)/Q_\nu^{1/\sigma}\).

The function \(\Gamma_s(\eta)\) is monotonically decreasing in \(\eta\).
From the inequality \(\sum_{\nu\ge0} \eta_\nu \le \eta\) we deduce \(\eta_\nu \le \eta\) for every \(\nu\), and hence \(\Gamma_s(\eta) \le \Gamma_s(\eta_\nu)\).
Combining this with the previous estimate gives
\[
\Gamma_s(\eta) \le \Gamma_s(\eta_\nu) \le e^{\eta_\nu Q_\nu^{1/\sigma}} \le e^{\eta Q_\nu^{1/\sigma}} \qquad (\nu \ge 0).
\]
Choosing \(\nu = 0\) (so that \(Q_0 = T\)) yields the desired bound
\[
\Gamma_s(\eta) \le e^{\eta T^{1/\sigma}},
\]
with \(\eta = \wp + s c\).

\end{proof}

The central idea of our method rests on the precise interplay between Gevrey regularity and the non-resonance condition. The subexponential decay \(e^{-\eta Q^{1/\sigma}}\), inherent to the Gevrey class, precisely counterbalances the growth of the approximation function \(\Delta(Q)\) allowed by the \(\sigma\)-Bruno-R\"ussmann condition. This balance is quantified by the boundedness of the auxiliary function \(\Gamma_s(\eta)\), which thereby guarantees the control of small divisors in solving the cohomological equation~\eqref{h}.

\section{Quantization Procedure}\label{sec3}

The proof of Theorem \ref{ac} is structured in several steps.
First, we construct a Fourier integral operator \(T_h(t) = T_{1h}(t) \circ T_{2h}(t)\) that conjugates \(P_h(t)\) to a family of semiclassical pseudodifferential operators \(\widetilde{P}_h(t)\).
Subsequently, by means of an elliptic pseudodifferential operator \(A_h(t)\), we conjugate \(\widetilde{P}_h(t)\) to an operator whose principal symbol is \(K_0(I;t)+R_0(\varphi,I;t)\) and whose subprincipal symbol vanishes.
All operations on pseudodifferential operators are performed at the level of their symbols; the detailed symbolic calculus is presented in the proof of Theorem \ref{ad}.
Finally, we observe that the operator \(U_h(t)\) appearing in the statement of Theorem \ref{ac} is built from \(T_{1h}(t)\) and \(T_{2h}(t)\), reflecting the successive conjugations described above.

\subsection{Construction of the operator $T_{1h}$}

We construct the Fourier integral operator \(T_{1h}(t)\) as the quantization of the symplectomorphism \(\chi^1_t\), following the approaches in \cite{MR1770800} and \cite{MR0501196}.

Let \(\mathbb{L}\) be a smooth Hermitian line bundle over \(\mathbb{T}^n\) associated with a cohomology class \(\vartheta \in H^1(\mathbb{T}^n;\mathbb{Z}) \cong \mathbb{Z}^n\) via the representation
\[
\mathbb{Z} \longrightarrow \mathrm{SU}(1), \qquad n \longmapsto e^{{\rm i}\frac{\pi}{2}n}.
\]
The Maslov class of the Lagrangian submanifold \(\Lambda_\omega\) (for \(\omega \in \Omega_\sigma\)), defined as in \cite{MR211415}, can be canonically identified with an element \(\vartheta \in H^1(\mathbb{T}^n;\mathbb{Z})\) through the symplectic map \(\chi_t = \chi^0_t \circ \chi^1_t\).
Following \cite{MR0501196}, we consider the flat Hermitian line bundle \(\mathbb{L}\) over \(\mathbb{T}^n\) determined by \(\vartheta\). Its sections are conveniently described by functions on the universal cover \(\mathbb{R}^n\): a section \(f\) of \(\mathbb{L}\) corresponds uniquely to a function \(\tilde{f} : \mathbb{R}^n \to \mathbb{C}\) satisfying the quasi-periodicity condition
\begin{align} \label{ax}
\tilde{f}(x + 2\pi p) = e^{{\rm i}\frac{\pi}{2}\,\langle \vartheta,\,p \rangle} \,\tilde{f}(x),
\qquad \forall\,x\in\mathbb{R}^n,\; p\in\mathbb{Z}^n.
\end{align}
An orthonormal basis of \(L^2(\mathbb{T}^n;\mathbb{L})\) is then given by the family \(\{e_m\}_{m\in\mathbb{Z}^n}\), where
\[
\widetilde{e}_m(x) = e^{{\rm i}\langle m + \vartheta/4,\, x \rangle}.
\]

We now construct the Fourier integral operator $T_{1h}(t)$ associated with the Lagrangian manifold
\[
L_1 = \{(x,\xi,y,\eta): (x,\xi) = \chi^1_t(y,\eta),\ (y,\eta) \in \mathbb{T}^n \times D\},
\]
which maps $C^\infty(\mathbb{T}^n; \Omega^{1/2} \otimes \mathbb{L})$ to $C_0^\infty(M, \Omega^{1/2})$. Here $\Omega^{1/2}$ denotes the half-density bundle, and sections of $\mathbb{L}$ are defined by the quasi-periodicity condition \eqref{ax}. Since \(\chi_{t}^{1}: \mathbb{T}^{n}\times D \to T^{*}M\) (with \(D \subset \mathbb{R}^{n}\)) is an exact symplectomorphism, the twisted conormal
\[
L_{1}' = \{(x,y,\xi,\eta) : (x,\xi,y,-\eta) \in L_{1}\}
\]
is an exact Lagrangian submanifold of \(T^{*}(M \times \mathbb{T}^{n})\).
 This means that the pull-back \(\hat{\iota}^{*}\alpha\) of the canonical one-form \(\alpha\) on \(T^{*}(M \times \mathbb{T}^{n})\) under the inclusion \(\hat{\iota}: L_{1}' \hookrightarrow T^{*}(M \times \mathbb{T}^{n})\) satisfies
\[
\hat{\iota}^{*}\alpha = {\rm d}f
\]
for a smooth function \(f\) on \(L_{1}'\).

Set
\[
\phi(y_0, \xi_0; t_0) = \langle x_0, \xi_0 \rangle - f(\zeta; t_0), \quad \zeta = (x_0, y_0, \xi_0, \eta_0) \in L_1',
\]
where $f$ is given as above.
Consider the real analytic phase function
\[
\Psi(x,y,\xi;t) = \langle x, \xi \rangle - \phi(y,\xi;t),
\]
which locally parametrizes $L_1'$. Indeed, on the critical set
\[
O_{\Psi}=\{(x,y,\xi): {\rm d}_{\xi}\Psi=0\}
\]
we have $\operatorname{rank} {\rm d}_{(x,y,\xi)}{\rm d}_{\xi}\Psi = n$, and the map
\[
\hat{\iota}_{\Psi} \colon O_{\Psi} \longrightarrow L_{1}^{\prime}, \qquad
(x,y,\xi) \longmapsto \bigl(x,y,\Psi_{x},\Psi_{y}\bigr)
\]
is a local diffeomorphism onto an open subset of $L_{1}^{\prime}$.

Fix $\bar{\sigma} > 1$ and choose a symbol $a \in \mathcal{S}_l(U \times U_2)$, where $U = U_0 \times U_1$ with $U_0$ a neighborhood of $x_0$ and $(y,\xi) \in U_1 \times U_2$. We extend $a$ to $y \in \mathbb{R}^n$ by
\[
\tilde{a}(x, y + 2\pi p, \xi, h; t) = e^{{\rm i} \frac{\pi}{2} \langle \vartheta, p \rangle} a(x, y, \xi, h; t), \qquad (x,y,\xi) \in U \times \mathbb{R}^n,\ p \in \mathbb{Z}^n,
\]
and extend $\phi$ to a $2\pi$-periodic function in the variable  $y$ on $U_0 \times (U_1 + 2\pi \mathbb{Z}^n) \times U_2$. As in \cite [Proposition 1.3.1]{MR0405513}, one may consider more general phase functions. Given a section $u \in C^\infty(\mathbb{T}^n; \mathbb{L})$ of the line bundle \(\mathbb{L}\), define
\begin{align}\label{ay}
T_h(t) u(x) = (2\pi h)^{-n} \int_{\mathbb{R}^n} \int_{U_1} e^{{\rm i} \Psi(x,y,\xi;t)/h} \tilde{a}(x,y,\xi,h;t) \tilde{u}(y)  {\rm d} \xi  {\rm d}y,
\end{align}
where $\tilde{u}$ satisfies \eqref{ax}. Note that
\[
\tilde{a}(x, y+2\pi, \xi, h; t) \tilde{u}(y+2\pi) = \tilde{a}(x, y, \xi, h; t) \tilde{u}(y),
\]
so that the integrand \(\tilde{a}(x, y, \xi, h; t) \tilde{u}(y)\) is \(2\pi\)-periodic in \(y \in \mathbb{R}^{n}\). We denote the distribution kernel of \(T_{h}(t)\) by \(K_{h}(\Psi,a)\) and define a class of \(h\)-Fourier integral operators
\[
T_h(t): C^\infty(\mathbb{T}^n; \Omega^{1/2} \otimes \mathbb{L}) \to C_0^\infty(M, \Omega^{1/2}),
\]
where $T_h(t) = \sum_{i=1}^N T_h^i(t)$ for a finite integer \(N\), and each \(T_{h}^{i}(t)\) possesses a
\(\mathcal{G}^{\bar{\sigma}}\)-symbol.  Locally, each \(T_{h}^{i}(t)\) is given by the expression~\eqref{ay}.

We denote by $I^{\bar{\sigma}}(M \times \mathbb{T}^n, L_1'; \Omega^{1/2} \otimes \mathbb{L}', h)$ the class of distribution kernels $K_h$ of the operators $T_h$ (without $t$-dependence), where $\mathbb{L}'$ is the dual bundle to $\mathbb{L}$. Note that the definition depends on the choice of phase functions.
According to \cite{MR0405513}, the principal symbol of $T_h$ is of the form $e^{{\rm i} f(\zeta)/h} \Xi(\zeta)$, where $\Xi$ is a smooth section of
of the bundle
\[
\Omega^{\frac{1}{2}}(L_{1}') \otimes M_{L} \otimes \pi_{L}^{*}(\mathbb{L}'),
\]
in which $\Omega^{\frac{1}{2}}(L_{1}')$ is the half-density bundle on $L_{1}'$,
$M_{L}$ is the Maslov bundle of $L_{1}'$, and $\pi_{L}^{*}(\mathbb{L}')$ is the
pull-back of $\mathbb{L}'$ via the canonical projection
$\pi_{L} : L_{1}' \to \mathbb{T}^{n}$.  The bundle $\Omega^{\frac{1}{2}}(L_{1}')$ is trivialized by pulling back the
canonical half-density on $\mathbb{T}^{n} \times D$ under the projection
$\pi_{2} : L_{1}' \to \mathbb{T}^{n} \times D$.
As in the proof of Theorem~2.5 of \cite{MR0501196}, the bundle
$\pi_{L}^{*}(\mathbb{L}')$ can be canonically identified with the dual
$M_{L}'$ of the Maslov bundle.  Hence the principal symbol of $T_h$ may be
canonically identified with a smooth function $b$ on $L_{1}'$.
Moreover, for any operator $T_h$ of the form~\eqref{ay}, one has
\[
b\bigl(\phi_{\xi}'(y,\xi), y, \xi, -\phi_{y}'(y,\xi)\bigr)
= a_{0}\bigl(\phi_{\xi}'(y,\xi), y, \xi\bigr)\,
\bigl|\det\bigl(\partial^{2}\phi(y,\xi)/\partial y\,\partial\xi\bigr)\bigr|^{-1/2},
\]
where $a_{0}$ denotes the leading term of the amplitude $a$.

From the foregoing construction we select an operator \(T_{1h}(t)\) whose principal symbol equals \(1\) in a neighbourhood of the pull-back, via \(\pi_{2}\), of the set \(\Lambda\) associated with \(H\circ\chi^{1}_{t}\).
We write \(T_{1h}(t)=A_{h}(t)+h B_{h}(t)\).
Furthermore, we assume that the leading term \(a_{0}\) equals \(1\) on \(U\); this is the principal symbol of the elliptic operator \(A_{h}(t)\).
We then solve a linear equation for the real part of the principal symbol of \(B_{h}(t)\).
This hypothesis forms the foundation for the subsequent analysis.
We conjugate the original operator \(P_{h}(t)\) by \(T_{1h}(t)\) (defined as above).
Applying the stationary phase Lemma~\ref{az}, we prove that
\[
P_{h}^{1}(t)=T_{1h}^{*}(t)\circ P_{h}(t)\circ T_{1h}(t)
\]
is an \(h\)-pseudodifferential operator on \(C^{\infty}(\mathbb{T}^{n};\mathbb{L})\) with a symbol belonging to \(\mathcal{S}^{\bar{\sigma}}(\mathbb{T}^{n}\times D)\).
Lemma~2.9 of \cite{MR0501196} then implies that the principal symbol of \(P_{h}^{1}(t)\) is \(H\circ\chi^{1}_{t}\) and that its subprincipal symbol vanishes.

We now state the stationary phase lemma.
\begin{lemma}\label{az}
Let \(\Phi(x,y)\) be a real analytic function in a neighborhood of \((0,0)\) in \(\mathbb{R}^{m_1+m_2}\).
Assume that \(\partial_x\Phi(0,0)=0\) and that the Hessian \(\partial_{xx}^2\Phi(0,0)\) is non-singular.
Let \(x(y)\) be the unique solution of \(\partial_x\Phi(x,y)=0\) with \(x(0)=0\), given by the implicit function theorem.
Then, for any symbol \(g \in \mathcal{S}^{\bar{\sigma}}(U)\), where \(U\) is a suitable neighborhood of \((0,0)\), we have
\[
\int e^{{\rm i} \Phi(x,y)/h} g(x,y,h)  {\rm d} x = e^{{\rm i} \Phi(x(y),y)/h} G(y,h),
\]
where $G \in \mathcal{S}^{\bar{\sigma}} (V)$ for some neighborhood \(V\) of \(0\) in \(\mathbb{R}^{m_2}\).
\end{lemma}
The proof of this lemma can be found in \cite{MR1770800}.

\subsection{Construction of the operator $T_{2h}$}

This section is devoted to the construction of the Fourier integral operator \(T_{2h}(t)\).
Our starting point is the generating function \(\Phi\) of the symplectomorphism \(\chi^{0}_{t}\).
The procedure for obtaining this function is described in detail  in \cite{MR4578527}.
We assume that the generating function \(\Phi(x,I;t)\) belongs to the Gevrey class
\(\mathcal{G}^{\rho,\rho+1,\rho+1}\bigl(\mathbb{R}^{n}\times D\times(-\tfrac12,\tfrac12)\bigr)\) and has the form
\begin{align} \label{a9}
\Phi(x,I;t)=\langle x,I\rangle + \phi(x,I;t),
\end{align}
where \(\phi\) is \(2\pi\)-periodic in \(x\) and satisfies
\(\phi\in\mathcal{G}^{\rho,\rho+1,\rho+1}\bigl(\mathbb{T}^{n}\times D\times(-\tfrac12,\tfrac12)\bigr)\).
Moreover, $\phi$ satisfies the estimates
\begin{align}\label{bs}
\left| D_I^\alpha D_\theta^\beta D_t^\delta \phi(\theta,I;t) \right| \leq A \kappa C^{|\alpha|+|\beta|+|\delta|} L_1^{|\alpha|} (L_1/\kappa)^{|\beta|} \alpha!^{\rho+1} \beta!^\rho \delta!^{\rho+1} L_1^{N/2} \varepsilon_H^q
\end{align}
for $(\theta,I;t) \in \mathbb{T}^n \times D \times (-\tfrac{1}{2}, \tfrac{1}{2})$ and with constants \(A,\kappa,C,L_{1}>0\).   Here $\varepsilon_H^q$ (with $0 < q < 1$) denotes the norm of the perturbation $H$; we typically take $q = \tfrac{1}{2}$.

For \(\varepsilon_{H}\) sufficiently small, the function \(\Phi\) is a generating function of an exact symplectic transformation
\[
\chi^0_t: \mathbb{T}^n \times D \times (-\tfrac{1}{2}, \tfrac{1}{2}) \to \mathbb{T}^n \times D,
\]
which itself belongs to the Gevrey class \(\mathcal{G}^{\rho,\rho+1,\rho+1}\).   The transformation is defined implicitly by
\[
\chi^0_t(\nabla_I \Phi(\theta,I;t), I) = (\theta, \nabla_\theta \Phi(\theta,I;t)).
\]
Under this transformation the Hamiltonian becomes
\begin{align}\label{a5}
\tilde{H}(\varphi,I;t) = (H \circ \chi_t^0)(\varphi,I) = K_0(I;t) + R_0(\varphi,I;t), \quad \omega \in \Omega_\sigma.
\end{align}

We now construct the operator \(T_{2h}(t)\) by quantizing the symplectomorphism \(\chi^{0}_{t}\).
The resulting \(h\)-Fourier integral operator
\[
T_{2h}(t): L^{2}(\mathbb{T}^{n};\mathbb{L}) \longrightarrow L^{2}(\mathbb{T}^{n};\mathbb{L})
\]
is associated with the canonical relation graph of \(\chi^{0}_{t}\).
Its distribution kernel has the form
\begin{align} \label{a6}
(2\pi h)^{-n} \int_{D} e^{i (\langle x-y,I\rangle + \phi(x,I;t))/h}
b(x,I;h,t) \; \mathrm{d}I, \qquad
x \in \mathbb{T}^{n},\; I \in D,
\end{align}
where the amplitude \(b\) belongs to the Gevrey symbol class
\(\mathcal{S}_{\tilde{l}}\bigl(\mathbb{T}^{n}\times D\times(-\frac12,\frac12)\bigr)\)
with \(\tilde{l}=(\sigma,\mu,\lambda,\sigma+\mu+\lambda-1)\).
Here \(\phi\) is the same function as in~\eqref{a9}.
We assume that the principal symbol of \(T_{2h}(t)\) equals \(1\) in a neighbourhood of
\(\mathbb{T}^{n}\times D\).

Define the composite operator $T_h(t) = T_{1h} \circ T_{2h}(t)$. Then
\[
\widetilde{P}_{h}(t)=T_{h}^{*}(t) \circ P_{h}(t) \circ T_{h}(t)
= T_{2h}^{*}(t) \circ P_{h}^{1}(t) \circ T_{2h}(t),
\]
where \(P_{h}^{1}(t)=T_{1h}^{*}(t) \circ P_{h}(t) \circ T_{1h}(t)\).
Applying Definition~\ref{bv} together with the stationary phase Lemma~\ref{az},
one readily verifies that \(\widetilde{P}_{h}(t)\) is an \(h\)-pseudodifferential operator
whose symbol lies in the class \(\mathcal{S}_{\tilde{l}}\),
again with \(\tilde{l}=(\sigma,\mu,\lambda,\sigma+\mu+\lambda-1)\).

\subsection{Construction of $U_h(t)$}

Suppose that \(A_{h}(t)\) is an elliptic pseudodifferential operator with symbol \(a \in \mathcal{S}_{l}\) and principal symbol \(a_{0}=1\). Define the operator $V_h(t) = T_h(t) \circ A_h(t)$, where $T_h(t) = T_{1h}(t) \circ T_{2h}(t)$. This construction yields the relation
\[
P_h(t) \circ V_h(t) = V_h(t) \circ (P_h^0(t) + R_h(t)),
\]
where $P_h^0(t)$ and $R_h(t)$ possess the desired analytical properties. However, \(V_{h}(t)\) need not be unitary.

To obtain a unitary operator, we define
\[
Q_h(t) = (V_h^*(t) \circ V_h(t))^{-1/2},
\]
which is a pseudodifferential operator with symbol $q(\varphi, I; h, t)$ in the class $\mathcal{S}_{\tilde{l}}(\mathbb{T}^n \times D \times (-\tfrac{1}{2}, \tfrac{1}{2}))$, where $\tilde{l} = (\sigma, \mu, \lambda, \sigma + \mu + \lambda - 1)$. The desired unitary operator is then given by
\[
U_h(t) = V_h(t) \circ Q_h(t).
\]
This operator $U_h(t)$ satisfies all required properties for our analysis.

\section{ Homological Equation under the $\sigma$-Bruno-R\"ussmann Condition in Gevrey Classes}
\label{sec4}

To construct the quantum Birkhoff normal form, we solve the homological equation on the Cantor set
\(\mathbb{T}^{n}\times E_{\kappa}(t)\), where \(E_{\kappa}(t)\subset D\) consists of action variables
satisfying the \(\sigma\)-Bruno-R\"ussmann condition.
The frequency map \(I\longmapsto\omega(I;t)\) inherits Gevrey regularity from the Hamiltonian \(H\).

The notation $\mathcal{S}_{l}\bigl(\mathbb{T}^{n}\times E_{\kappa}(t)\times(-\tfrac12,\tfrac12)\bigr)$
requires careful explanation.
The parameter $t\in(-\tfrac12,\tfrac12)$ plays a dual role:
it determines both the time-dependent Hamiltonian and the associated non-resonant set $E_{\kappa}(t)$,
while simultaneously appearing as an independent variable in the Gevrey symbol classes.
This double interpretation is essential for following the $t$-dependence of the conjugating operators
and for establishing \emph{uniform} Gevrey estimates.
In all derivative bounds, $t$ and $I$ are treated as independent variables; the symbol spaces on the
parameter-dependent Cantor set $\mathbb{T}^{n}\times E_{\kappa}(t)$ are understood via Whitney
extension theory.

We now assume that \(\widetilde{P}_{h}(t)\) is a self-adjoint \(h\)-pseudodifferential operator with symbol
\(p \in \mathcal{S}_{\tilde{l}}\bigl(\mathbb{T}^{n}\times D\times(-\frac12,\frac12)\bigr)\),
where \(\tilde{l}=(\sigma,\mu,\lambda,\sigma+\mu+\lambda-1)\).  Its asymptotic expansion is
\[
p(\varphi,I;t,h) \sim \sum_{j=0}^{\infty} p_{j}(\varphi,I;t) h^{j},
\]
and on the Cantor set we require
\[
p_{0}(I;t)=K_{0}(I;t), \qquad p_{1}(I;t)=0, \qquad
\forall\; (\varphi,I;t)\in \mathbb{T}^{n}\times E_{\kappa}(t)\times(-\tfrac12,\tfrac12).
\]

We shall conjugate \(\widetilde{P}_{h}(t)\) to a normal form \(P_{h}^{0}(t)\) by an elliptic
pseudodifferential operator \(A_{h}(t)\) whose symbol
\(a(\varphi,I;t,h)\) belongs to \(\mathcal{S}_{l}\bigl(\mathbb{T}^{n}\times E_{\kappa}(t)\bigr)\)
with \(l=(\sigma,\mu,\lambda,\bar{\rho})\).  Explicitly,
\[
P_{h}^{0}(t)=A_{h}^{*}(t)\circ \widetilde{P}_{h}(t)\circ A_{h}(t).
\]
By construction, the symbol \(p\) of \(\widetilde{P}_{h}(t)\) lies in the class
\(\mathcal{S}_{\tilde{l}}\bigl(\mathbb{T}^{n}\times E_{\kappa}(t)\bigr)\) with the same index
\(\tilde{l}=(\sigma,\mu,\lambda,\sigma+\mu+\lambda-1)\).

Theorem~\ref{ac} follows directly from Theorem~\ref{ad}; it therefore suffices to prove
Theorem~\ref{ad}.  In other words, we must solve accurately the homological equation
\(\mathcal{L}_{\omega}u=f\) under the \(\sigma\)-Bruno-R\"ussmann condition within the
Gevrey classes on the parameter-dependent Cantor set \(E_{\kappa}(t)\), as described in the
Introduction.

\begin{theorem}\label{ad}
There exist symbols
\[
a,\ p^{0} \in \mathcal{S}_{l}\bigl(\mathbb{T}^{n}\times E_{\kappa}(t)\times(-\tfrac12,\tfrac12)\bigr),
\qquad l=(\sigma,\mu,\lambda,\bar{\rho}),
\]
with asymptotic expansions
\[
a(\varphi,I;t,h) \sim \sum_{j=0}^{\infty} a_{j}(\varphi,I;t)h^{j}, \qquad
p^{0}(I;t,h) \sim \sum_{j=0}^{\infty} p_{j}^{0}(I;t)h^{j},
\]
satisfying \(a_{0}=1\), \(p_{0}^{0}=K_{0}(I;t)\) and \(p_{1}^{0}=0\), such that
\[
p \circ a \;-\; a \circ p^{0} \;\sim\; 0
\]
in \(\mathcal{S}_{l}\bigl(\mathbb{T}^{n}\times E_{\kappa}(t)\times(-\tfrac12,\tfrac12)\bigr)\).
Here \(\circ\) denotes the composition of symbols (i.e. the star product) and \(\sim\) means equality of formal power series in \(h\).
\end{theorem}

\subsection{Solving the Homological Equation}

This section is devoted to solving the homological equation \(\mathcal{L}_{\omega}u = f\).
We first fix some notation. For a multi-index \(\gamma = (\gamma_1, \dots, \gamma_n) \in \mathbb{Z}_+^n\) and a vector \(k = (k_1, \dots, k_n) \in \mathbb{Z}_+^n\), we set
\begin{equation*}\label{a3}
\kappa^{\gamma} = \prod_{i=1}^n \kappa_i^{\gamma_i}, \qquad |k| = |k_1| + \dots + |k_n|.
\end{equation*}
The partial order on \(\mathbb{Z}_+^n\) is defined componentwise:
\begin{equation*}
\alpha \leq \gamma \quad \Longleftrightarrow \quad \alpha_i \leq \gamma_i \ \text{ for every } 1 \leq i \leq n.
\end{equation*}

The following lemma provides the key estimates needed for the solution.

\begin{lemma}
Let \(\omega(I;t) \in C^{\infty}\bigl(E_{\kappa}(t) \times (-\tfrac12,\tfrac12); \mathbb{R}^{n}\bigr)\) satisfy the following Gevrey-type estimates:
\begin{align}
\bigl| D^{\alpha}_{I} D^{\delta}_{t} \omega(I;t) \bigr| &\le C_{1}^{|\alpha|+|\delta|}\, \alpha!^{\mu+1}\, \delta!^{\mu+1},
&& \forall\, I \in E_{\kappa}(t),\; \alpha \in \mathbb{Z}_{+}^{n},\; \delta \in \mathbb{Z}_{+}, \label{bc} \\
\bigl| \langle \omega(I;t), k \rangle \bigr| &\ge \frac{\kappa}{\Delta(|k|)},
&& \forall\, I \in E_{\kappa}(t),\; k \in \mathbb{Z}^{n} \setminus \{0\}. \label{bd}
\end{align}
Then there exists a constant \(C_{0}>0\), depending only on \(n\), \(\kappa\), and \(C_{1}\), such that for every
\(I \in E_{\kappa}(t)\), every non-zero \(k \in \mathbb{Z}^{n}\), and every multi-index \(\alpha \in \mathbb{Z}_{+}^{n}\) and every \(\delta \in \mathbb{Z}_{+}\),
\begin{align}
\bigl| D^{\alpha}_{I} D^{\delta}_{t} \bigl( \langle \omega(I;t), k \rangle^{-1} \bigr) \bigr|
&\le C_{0}^{|\alpha|+|\delta|+1} \,
\alpha! \, \delta! \,
\max_{0 \le j \le |\alpha|+|\delta|}
\Bigl( (|\alpha|+|\delta|-j)!^{\mu} \,
|k|^{j} \, \Delta^{\,j+1}(|k|) \Bigr). \label{ba}
\end{align}
\end{lemma}

\begin{proof}
Denote \(l_{k}(I;t) = \langle \omega(I;t), k \rangle\) for \(0 \neq k \in \mathbb{Z}^{n}\).
Applying Leibniz' rule to the identity \(l_{k}(I;t) \cdot l^{-1}_{k}(I;t) = 1\), we obtain for \(|\alpha| \ge 1\) or \(|\delta| \ge 1\)
\begin{align*}
0 = D^{\alpha}_{I} D^{\delta}_{t} \bigl( l_{k}(I;t) \, l^{-1}_{k}(I;t) \bigr)
= \sum_{\alpha' \le \alpha} \sum_{\delta' \le \delta}
\binom{\alpha}{\alpha'} \binom{\delta}{\delta'}
\bigl( D^{\alpha'}_{I} D^{\delta'}_{t} l_{k}(I;t) \bigr)
\bigl( D^{\alpha-\alpha'}_{I} D^{\delta-\delta'}_{t} l^{-1}_{k}(I;t) \bigr).
\end{align*}
Separating the term with \((\alpha',\delta')=(0,0)\) gives
\begin{align*}
- D^{\alpha}_{I} D^{\delta}_{t} l^{-1}_{k}(I;t)
= l^{-1}_{k}(I;t) \Bigg[
& \sum_{\substack{0 \le \alpha' \le \alpha \\ 0 \le \delta' \le \delta \\ (\alpha',\delta') \neq (0,0)} }
\binom{\alpha}{\alpha'} \binom{\delta}{\delta'}
\bigl( D^{\alpha'}_{I} D^{\delta'}_{t} l_{k}(I;t) \bigr)
\bigl( D^{\alpha-\alpha'}_{I} D^{\delta-\delta'}_{t} l^{-1}_{k}(I;t) \bigr) \Bigg].
\end{align*}
We decompose the sum into three parts:
\begin{align*}
I_{1} &= \sum_{\substack{0 < \alpha' \le \alpha \\ 0 < \delta' \le \delta}}
\binom{\alpha}{\alpha'} \binom{\delta}{\delta'}
\bigl( D^{\alpha'}_{I} D^{\delta'}_{t} l_{k}(I;t) \bigr)
\bigl( D^{\alpha-\alpha'}_{I} D^{\delta-\delta'}_{t} l^{-1}_{k}(I;t) \bigr), \\
I_{2} &= \sum_{0 < \delta' \le \delta} \binom{\delta}{\delta'}
\bigl( D^{\delta'}_{t} l_{k}(I;t) \bigr)
\bigl( D^{\alpha}_{I} D^{\delta-\delta'}_{t} l^{-1}_{k}(I;t) \bigr), \\
I_{3} &= \sum_{0 < \alpha' \le \alpha} \binom{\alpha}{\alpha'}
\bigl( D^{\alpha'}_{I} l_{k}(I;t) \bigr)
\bigl( D^{\alpha-\alpha'}_{I} D^{\delta}_{t} l^{-1}_{k}(I;t) \bigr).
\end{align*}
Thus
\[
- D^{\alpha}_{I} D^{\delta}_{t} l^{-1}_{k}(I;t)
= l^{-1}_{k}(I;t) \bigl( I_{1} + I_{2} + I_{3} \bigr).
\]

Assume that estimate \eqref{ba} holds for all multi-indices with \(|\alpha|+|\delta| < p\).
We shall prove it for \(|\alpha|+|\delta| = p\).

From \eqref{bc} we obtain a constant \(C_{2} > 0\) (depending only on \(C_{1}\)) such that for all
\(I \in E_{\kappa}(t)\) and all non-zero \(\alpha,\delta\),
\begin{equation} \label{bc_prime}
\bigl| D^{\alpha}_{I} D^{\delta}_{t} \omega(I;t) \bigr|
\le C_{2}^{|\alpha|+|\delta|}
\Bigl( \frac{\alpha!}{|\alpha|} \Bigr)^{\mu+1}
\Bigl( \frac{\delta!}{|\delta|} \Bigr)^{\mu+1}.
\end{equation}
 Set \(\varepsilon = C_{0}^{-1} C_{2}\) (the constant \(C_{0}\) will be chosen later).
Using \eqref{bd}, the induction hypothesis, and the elementary inequality
\(x! \, y! \le (x+y)!\) for non-negative integers, we estimate each term.

\noindent \textbf{Estimate of \(I_{1}\):}
\begin{align*}
|I_{1}| &\le \frac{\Delta(|k|)}{\kappa} \,
\sum_{\substack{0 < \alpha' \le \alpha \\ 0 < \delta' \le \delta}}
\binom{\alpha}{\alpha'} \binom{\delta}{\delta'}
\Bigl| D^{\alpha'}_{I} D^{\delta'}_{t} l_{k}(I;t) \Bigr|
\Bigl| D^{\alpha-\alpha'}_{I} D^{\delta-\delta'}_{t} l^{-1}_{k}(I;t) \Bigr| \\
&\le \frac{\Delta(|k|)}{\kappa} \,
\sum_{\substack{0 < \alpha' \le \alpha \\ 0 < \delta' \le \delta}}
\binom{\alpha}{\alpha'} \binom{\delta}{\delta'}
C_{2}^{|\alpha'|+|\delta'|}
\Bigl( \frac{\alpha'!}{|\alpha'|} \Bigr)^{\mu}
\Bigl( \frac{\delta'!}{|\delta'|} \Bigr)^{\mu}
|k| \\
&\qquad \times C_{0}^{|\alpha-\alpha'|+|\delta-\delta'|+1}
(\alpha-\alpha')! \, (\delta-\delta')! \\
&\qquad \times \max_{0 \le j \le |\alpha-\alpha'|+|\delta-\delta'|}
\Bigl( (|\alpha-\alpha'|+|\delta-\delta'|-j)!^{\mu}
|k|^{j} \Delta^{\,j+1}(|k|) \Bigr).
\end{align*}
Using \(\binom{\alpha}{\alpha'} \alpha'! \, (\alpha-\alpha')! = \alpha!\) (and similarly for \(\delta\)),
and noting that the maximum in the last line is bounded by
\[
\max_{0 \le j \le |\alpha|+|\delta|-1}
\Bigl( (|\alpha|+|\delta|-j-1)!^{\mu} |k|^{j+1} \Delta^{\,j+2}(|k|) \Bigr),
\]
we obtain after simplification
\begin{align*}
|I_{1}| &\le
C_{0}^{|\alpha|+|\delta|+1} \alpha! \, \delta! \,
\max_{0 \le j \le |\alpha|+|\delta|}
\Bigl( (|\alpha|+|\delta|-j)!^{\mu} |k|^{j} \Delta^{\,j+1}(|k|) \Bigr) \\
&\qquad \times \Bigl( \kappa^{-1} \sum_{0 < \alpha'} \varepsilon^{|\alpha'|}
\sum_{0 < \delta'} \varepsilon^{|\delta'|} \Bigr).
\end{align*}
The double sum can be made smaller than \(1\) by choosing \(\varepsilon\) sufficiently small.
Indeed, there exists a constant \(c_{\varepsilon}\) (depending only on \(n\) and \(\varepsilon\)) such that
\[
\kappa^{-1} \sum_{0 < \alpha'} \varepsilon^{|\alpha'|}
\sum_{0 < \delta'} \varepsilon^{|\delta'|}
\le \varepsilon \kappa^{-1}
\Bigl( \sum_{s=2}^{\infty} s^{n} \varepsilon^{s-2} \Bigr)
\Bigl( \sum_{\delta'=1}^{\infty} \varepsilon^{\delta'} \Bigr)
< 1
\]
for \(\varepsilon\) small enough.

\noindent \textbf{Estimates of \(I_{2}\) and \(I_{3}\):}
Analogous calculations yield the same upper bound for \(|I_{2}|\) and \(|I_{3}|\).

Collecting the three estimates and using \(|l_{k}^{-1}(I;t)| \le \kappa^{-1} \Delta(|k|)\), we finally obtain
\begin{align*}
\bigl| D^{\alpha}_{I} D^{\delta}_{t} l^{-1}_{k}(I;t) \bigr|
&\le 3 C_{0}^{|\alpha|+|\delta|+1} \alpha! \, \delta! \,
\max_{0 \le j \le |\alpha|+|\delta|}
\Bigl( (|\alpha|+|\delta|-j)!^{\mu} |k|^{j} \Delta^{\,j+1}(|k|) \Bigr).
\end{align*}
This completes the induction and proves the lemma.

\end{proof}

For any $l>0$, we set $\langle k\rangle_{l}=1+|k_{1}|^{l}+\cdots+|k_{n}|^{l},k\in \mathbb{Z}^{n}$. This function has the following properties:
\begin{align*}
|k|^{l}\leq n^{l}\langle k\rangle_{l}.
\end{align*}

Suppose that \(f(\varphi, I; t) \in C^{\infty}\bigl(\mathbb{T}^{n} \times E_{\kappa}(t) \times (-\frac{1}{2}, \frac{1}{2})\bigr)\)
satisfies the anisotropic Gevrey estimate
\begin{align} \label{f}
\bigl| D_{I}^{\alpha} D_{\varphi}^{\beta} D_{t}^{\delta} f(\varphi, I; t) \bigr|
\leq d_{0} C^{\mu |\alpha| + |\beta| + |\delta|}
\Gamma\bigl(\mu |\alpha| + \sigma |\beta| + \lambda |\delta| + q\bigr)
\end{align}
for all \(I \in E_{\kappa}(t)\), multi-indices \(\alpha, \beta \in \mathbb{Z}_{+}^{n}\),
\(\delta \in \mathbb{Z}_{+}\), and some \(q \geq 1\).
Here \(\Gamma\) denotes the Gamma function, and \(\sigma, \mu, \lambda > 0\) are fixed constants.
Assume further that \(f\) is \(2\pi\)-periodic in \(\varphi\).
Define its Fourier coefficients by
\begin{align} \label{bx}
f_{k}(I; t) = (2\pi)^{-n} \int_{\mathbb{T}^{n}}
e^{-i \langle k, \varphi \rangle} f(\varphi, I; t) \, \mathrm{d}\varphi, \qquad k \in \mathbb{Z}^{n}.
\end{align}
Then, for \(\sigma > 1\) and a constant \(C^{-1} > 0\), we have
\begin{align}\label{c}
\bigl| D_{I}^{\alpha} D_{t}^{\delta} f_{k}(I; t) \bigr|
\leq d_{0} C^{\mu |\alpha| + |\delta| + 1}
\Gamma\bigl(\mu |\alpha| + \lambda |\delta| + q\bigr)
e^{-C^{-1} |k|^{1/\sigma}},
\end{align}
for some \(q \geq 1\).  The proof of this estimate is given in Appendix \ref{A1}.

Then, integrating by parts, we also have the following estimate for $f_{k}$
\begin{align*}
|k^{\gamma}\langle k\rangle_{l}D_{I}^{\alpha}f_{k}|\leq d_{0}C^{\mu|\alpha|+|\gamma|+|\delta|+l}\Gamma(\mu|\alpha|+\sigma|\gamma|+\lambda|\delta|+\sigma l+q)e^{-c|k|^{1/\sigma}}
\end{align*}
for $\gamma\in \mathbb{Z}_{+},l\in \mathbb{N}$.

Now suppose
\begin{align}
\int_{\mathbb{T}^{n}} f(\varphi, I; t)  \mathrm{d} \varphi = 0. \label{g}
\end{align}
We consider the homological equation
\begin{align}\label{h}
 \mathcal{L}_{\omega} u(\varphi, I; t) = f(\varphi, I; t), \qquad u(0, I; t) = 0,
\end{align}
and we shall establish an upper bound for
\[
\bigl| D_{I}^{\alpha} D_{\varphi}^{\beta} D_{t}^{\delta} u(\varphi, I; t) \bigr|
\]
for all multi-indices \(\alpha, \beta \in \mathbb{Z}_{+}^{n}\) and every \(\delta \in \mathbb{Z}_{+}\).

\begin{proposition} \label{av}
Let $f \in C^{\infty}\bigl(\mathbb{T}^{n} \times E_{\kappa}(t) \times (-\tfrac12,\tfrac12)\bigr)$ satisfy
\eqref{f} and \eqref{g}, and assume $\lambda \geq \mu$.
Then equation \eqref{h} admits  a unique solution
$u \in  C^{\infty}\bigl(\mathbb{T}^{n} \times E_{\kappa}(t) \times (-\tfrac12,\tfrac12)\bigr)$,
and this solution satisfies the estimate
\begin{align}\label{xx}
\bigl| D_{I}^{\alpha} D_{\varphi}^{\gamma} D_{t}^{\delta} u(\varphi,I;t) \bigr|
\leq d_{0} \mathcal D \,
C^{\mu|\alpha|+|\gamma|+|\delta|}\,
\Gamma\bigl(\mu|\alpha|+\sigma|\gamma|+\lambda|\delta|+\sigma l+q\bigr)
\end{align}
for all $I \in E_{\kappa}(t)$, all multi-indices $\alpha,\gamma \in \mathbb{Z}_{+}^{n}$, and every
$\delta \in \mathbb{Z}_{+}$.
Here $\mathcal D$ is a constant depending on $n$, $C$, $C_{0}$, $\delta$, $\eta$, $s$, $T$, and $\mu$.

\end{proposition}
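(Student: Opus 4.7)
The plan is to solve the homological equation by Fourier series, estimate each coefficient by combining the given bound \eqref{c} on $f_k$ with the divisor estimate \eqref{ba} just proved, and then sum in $k$ by exploiting the sub-exponential decay of the Fourier coefficients against the $\sigma$-approximation function.

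First I would write $f(\varphi,I;t)=\sum_{k\neq 0}f_k(I;t)e^{i\langle k,\varphi\rangle}$, which is legitimate because of the mean zero hypothesis \eqref{g}. The only possible solution with $u(0,I;t)=0$ is then
\[
u(\varphi,I;t)=\sum_{k\neq 0}\frac{f_k(I;t)}{i\langle k,\omega(I;t)\rangle}\bigl(e^{i\langle k,\varphi\rangle}-1\bigr),
\]
so uniqueness is automatic and the whole task is to justify convergence together with the derivative estimate \eqref{xx}. Differentiating termwise gives a factor $(ik)^\gamma$ from $D^\gamma_\varphi$, and the Leibniz rule distributes $D^\alpha_I D^\delta_t$ between $f_k$ and the divisor $\langle k,\omega(I;t)\rangle^{-1}$.

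Next I would substitute the two known estimates: \eqref{c} for $D^{\alpha'}_I D^{\delta'}_t f_k$ (which contributes $\Gamma(\mu|\alpha'|+\lambda|\delta'|+q)\,e^{-C^{-1}|k|^{1/\sigma}}$) and \eqref{ba} for $D^{\alpha''}_I D^{\delta''}_t\langle k,\omega\rangle^{-1}$ (which contributes the combinatorial factor $(|\alpha''|+|\delta''|-j)!^\rho |k|^j\Delta(|k|)^{j+1}$ and a Leibniz tail). After multiplying by $|k|^{|\gamma|}$ from the $\varphi$-derivatives, each term of the $k$-sum is majorized by an expression proportional to
\[
|k|^{|\gamma|+j}\,\Delta(|k|)^{j+1}\,\exp\bigl(-C^{-1}|k|^{1/\sigma}\bigr).
\]
To carry out the sum over $k$ I would split the exponent as $\exp(-\eta|k|^{1/\sigma})\exp(-(C^{-1}-\eta)|k|^{1/\sigma})$, pick up a convergent geometric factor from the second piece, and bound the first piece via the supremum function $\Gamma_{|\gamma|+j}((j+1)\eta)$ introduced in Subsection 2.2. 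Lemma \ref{a7} then converts this into $\exp\bigl((j+1)\eta T^{1/\sigma}\bigr)$ for a suitable $T$ depending on $|\gamma|$ and $j$, which in turn is absorbed into Gamma-function growth in the desired index.

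The main obstacle is matching Gevrey indices. The divisor estimate produces $(|\alpha''|+|\delta''|-j)!^\rho$-type losses in the $(I,t)$-variables, so to recover the target bound $\Gamma(\mu|\alpha|+\rho|\gamma|+\lambda|\delta|+q)$ I must absorb these $\rho$-losses into the $\mu$- and $\lambda$-indices; this is precisely where the hypothesis $\lambda\geq\rho+1$ (and the analogous implicit condition $\mu\geq\rho+1$ built into $l$) is used, via $x!y!\leq(x+y)!$ together with log-convexity of $\Gamma$. The $\varphi$-index drops from $\sigma$ in \eqref{f} to $\rho$ in \eqref{xx} because the sub-exponential decay $e^{-C^{-1}|k|^{1/\sigma}}$ is entirely consumed in the $k$-summation above, after which the surviving $|k|^{|\gamma|+j}\Delta(|k|)^{j+1}$ growth is matched against $\Gamma(\rho|\gamma|+\cdots)$ through Lemma \ref{a7}. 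The combinatorial Leibniz sums over $(\alpha',\alpha'')$, $(\delta',\delta'')$ and the summation in $j$ produce only bounded multiplicative factors that are absorbed in the constant $D$, and all bounds are uniform for $I\in E_\kappa(t)$ because \eqref{bc}–\eqref{bd} and the resulting divisor estimate \eqref{ba} are uniform there.
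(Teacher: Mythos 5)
Your proposal follows the same overall plan as the paper: Fourier expansion, termwise division by $\langle\omega(I;t),k\rangle$, Leibniz splitting of $D^\alpha_I D^\delta_t$ between $f_k$ and the small divisor, use of \eqref{ba} and \eqref{c}, and summation over $k$ controlled by the supremum function $\Gamma_s$ and Lemma~\ref{a7}. So the skeleton matches.

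There is, however, a genuine gap in the step that carries the weight of the proof, namely how the $\max_{0\le j\le |\alpha_1|+|\delta_1|}$ in \eqref{ba} is handled. You leave $j$ general so that each $k$-term carries a factor $|k|^{|\gamma|+j}\Delta(|k|)^{j+1}e^{-C^{-1}|k|^{1/\sigma}}$, and you claim to bound its supremum in $|k|$ by $\Gamma_{|\gamma|+j}((j+1)\eta)$. But by definition $\Gamma_s(\eta')=\sup_{t\ge 0}(1+t)^s\Delta(t)e^{-\eta' t^{1/\sigma}}$ involves only the \emph{first} power of $\Delta$; it does not control $\Delta^{j+1}$ for $j$ that grows with $|\alpha|+|\delta|$, and replacing $\eta$ by $(j+1)\eta$ in the argument of $\Gamma_s$ does not fix this. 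Absent a further idea (e.g.\ noting that $\Delta^{j+1}$ is itself a $\sigma$-approximation function and rerunning Lemma~\ref{a7} with a $T$ and a linear $\eta(s)$ that depend on $j$, then absorbing the resulting $j$-dependent exponential against the $(|\alpha_1|+|\delta_1|-j)!^\rho$ decay), the claimed constant $D$ is not actually bounded uniformly in the multi-indices. Also, \eqref{ba} has a $\max$ over $j$, not a sum, so there is no ``summation in $j$'' to absorb. The paper avoids this difficulty altogether by arguing that the maximum in \eqref{ba} is attained at $j=0$ (using the constraint $\alpha_1\le\gamma$ appearing in the Proposition's statement), which reduces the $k$-sum to a single $\Delta(|k|)$ factor and lets $\sup_t t^{|\gamma|+n+1}\Delta(t)e^{-C^{-1}t^{1/\sigma}}$ be estimated via Lemma~\ref{a7} with $s=|\gamma|+n+1$, while the $\Gamma(\rho|\gamma|+\rho|\delta_1|)$ growth in the target comes from Stirling applied to $(|\alpha_1|+|\delta_1|)!^\rho$ under that same constraint; your sketch neither reproduces this reduction nor supplies a working substitute.
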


\begin{proof}
We begin by expressing \( f \) and \( u \) via their Fourier series. Since \( f \) is \( 2\pi \)-periodic in \( \varphi \) and satisfies \( \int_{\mathbb{T}^n} f(\varphi, I; t)  \mathrm{d} \varphi = 0 \) by \eqref{g}, we have
\[
f(\varphi, I; t) = \sum_{0 \neq k \in \mathbb{Z}^{n}} e^{\mathrm{i} \langle k, \varphi \rangle} f_k(I; t),
\]
where
\[
f_k(I; t) = (2\pi)^{-n} \int_{\mathbb{T}^{n}} e^{-\mathrm{i} \langle k, \varphi \rangle} f(\varphi, I; t)  \mathrm{d} \varphi.
\]
Similarly, we expand \( u \). Condition \eqref{g} implies \( u_0 = 0 \). Substituting these expansions into equation \eqref{h} and comparing coefficients yields, for every non-zero \( k \in \mathbb{Z}^n \) and \( I \in E_{\kappa}(t) \),
\[
u_k(I; t) = \langle \omega(I; t), k\rangle^{-1} f_k(I; t).
\]
Thus, a solution \( u \) exists and is given by
\[
u(\varphi, I; t) = \sum_{0 \neq k \in \mathbb{Z}^n} \langle \omega(I; t), k\rangle^{-1} f_k(I; t) e^{i\langle k, \varphi\rangle}.
\]
To establish the desired regularity, we estimate the derivatives of \( u \). Define $W(|k|)=|k|^{l}\Delta^{l+1}(|k|)$ for any real constant $l$.

 Applying the Leibniz rule and estimates \eqref{ba} and \eqref{c}, we obtain for any multi-indices \( \alpha, \gamma \in \mathbb{Z}_+^n \) with \( \alpha_1 \leq \gamma \) and \( \delta \in \mathbb{Z}_+ \) the estimate
 \begin{align*}
\bigl|D_{I}^{\alpha}D_{\varphi}^{\gamma}D_{t}^{\delta}u(\varphi,I;t)\bigr|
&
\\
\leq &\sum_{0 \neq k \in \mathbb{Z}^n} \sum_{\substack{0 < \alpha_1 \leq \alpha \\ 0 < \delta_1 \leq \delta}}
\binom{\alpha}{\alpha_1} \binom{\delta}{\delta_1}
\left| D_I^{\alpha_1} D_t^{\delta_1} \left( \langle \omega, k \rangle^{-1} \right) \right|
\cdot \left|k^{\gamma} D_I^{\alpha - \alpha_1} D_t^{\delta - \delta_1} f_k(I; t) \right|
\end{align*}
 Taking into account \eqref{ba} and \eqref{c}, we obtain for any multi-indices
$\alpha,\alpha_{1},\gamma\in\mathbb{Z}_{+}^{n}$
and any nonnegative integer $\delta$ the estimate
\[
\begin{aligned}
\bigl|D_{I}^{\alpha}D_{\varphi}^{\gamma}D_{t}^{\delta}u(\varphi,I;t)\bigr|
&\leq\Bigl|\sum_{0\neq k\in\mathbb{Z}^{n}}\sum_{0<\alpha_{1}\leq\alpha}
   \sum_{0<\delta_{1}\leq\delta}
   \binom{\alpha}{\alpha_{1}}\binom{\delta}{\delta_{1}}
   D_{I}^{\alpha_{1}}D_{t}^{\delta_{1}}\bigl(\langle\omega(I;t),k\rangle^{-1}\bigr)\\
&\qquad\times D_{I}^{\alpha-\alpha_{1}}D_{t}^{\delta-\delta_{1}}f_{k}(I;t)
   \,k^{\gamma}\Bigr|.
\end{aligned}
\]
Using the definition of $W(|k|)$ and the bounds provided by \eqref{ba} and
\eqref{c}, this yields
\[
\begin{aligned}
\bigl|D_{I}^{\alpha}D_{\varphi}^{\gamma}D_{t}^{\delta}u(\varphi,I;t)\bigr|
&\leq\sum_{0\neq k\in\mathbb{Z}^{n}}W(|k|)W(|k|)^{-1}
   \sum_{0<\alpha_{1}\leq\alpha}\sum_{0<\delta_{1}\leq\delta}
   \binom{\alpha}{\alpha_{1}}\binom{\delta}{\delta_{1}}
   C_{0}^{|\alpha_{1}|+|\delta_{1}|+1}\,\alpha_{1}!\,\delta_{1}!\\
&\quad\times\max_{0\leq j\leq|\alpha_{1}|+|\delta_{1}|}
   \bigl|(|\alpha_{1}|+|\delta_{1}|-j)!^{\mu}
   |k|^{j}\Delta^{j+1}(|k|)\bigr|\\
&\quad\times\bigl|k^{\gamma}
   D_{I}^{\alpha-\alpha_{1}}D_{t}^{\delta-\delta_{1}}f_{k}(I;t)\bigr|.
\end{aligned}
\]
By \eqref{c} we have
\[
|k^{\gamma}\langle k\rangle_{l}D_{I}^{\alpha}f_{k}|\leq d_{0}C^{\mu|\alpha|+|\gamma|+|\delta|+l}\Gamma(\mu|\alpha|+\sigma|\gamma|
+\lambda|\delta|+\sigma l+q)e^{-c|k|^{1/\sigma}}
\]
 hence
\[
\begin{aligned}
\bigl|D_{I}^{\alpha}D_{\varphi}^{\gamma}D_{t}^{\delta}u(\varphi,I;t)\bigr|
&\leq\sum_{0\neq k\in\mathbb{Z}^{n}}\Delta^{l+1}(|k|)
   e^{-C^{-1}|k|^{1/\sigma}}
   \sum_{0<\alpha_{1}\leq\alpha}\sum_{0<\delta_{1}\leq\delta}
   d_{0}C^{\mu|\alpha-\alpha_{1}|+|\gamma|+|\delta-\delta_{1}|+l+1}
   C_{0}^{|\alpha_{1}|+|\delta_{1}|+1}\\
&\quad\times\frac{\alpha!}{(\alpha-\alpha_{1})!}
   \frac{\delta!}{(\delta-\delta_{1})!}\,
   \Gamma\bigl(\mu|\alpha-\alpha_{1}|+\sigma|\gamma|+\lambda|\delta-\delta_{1}|
   +\sigma l+q\bigr)\\
&\quad\times\max_{0\leq j\leq|\alpha_{1}|+|\delta_{1}|}
   \bigl|(|\alpha_{1}|+|\delta_{1}|-j)!^{\mu}
   |k|^{j-l}\Delta^{j-l}(|k|)\bigr|.
\end{aligned}
\]

We now simplify the expression. The maximum in the last line is attained at $j=0$, yielding
\[
(|\alpha_1|+|\delta_1|)!^\mu\,|k|^{-l}\Delta^{-l}(|k|).
\]
By Stirling's formula, there exists a constant $C>0$ such that for every $x\ge1$,
\[
(x)!^\mu\le C_{3}^{\,x}\,\Gamma(\mu x).
\]
Using Lemma~\ref{bh} and the inequality $\Gamma(s)\Gamma(u)\le\Gamma(s+u)$ for $s,u\ge1$, we obtain
\[
\begin{aligned}
\bigl|D_{I}^{\alpha}D_{\varphi}^{\gamma}D_{t}^{\delta}u(\varphi,I;t)\bigr|
&\le d_{0}C_{0}C\sum_{0\neq k\in\mathbb{Z}^{n}}
   |k|^{-l}\Delta(|k|)e^{-C^{-1}|k|^{1/\sigma}}\\
&\quad\times\sum_{0<\alpha_{1}\le\alpha}\sum_{0<\delta_{1}\le\delta}
   \frac{\alpha!}{(\alpha-\alpha_{1})!}\frac{\delta!}{(\delta-\delta_{1})!}
   C^{\mu|\alpha-\alpha_{1}|+|\gamma|+|\delta-\delta_{1}|+l}
   C_{0}^{|\alpha_{1}|+|\delta_{1}|}\\
&\quad\times C_{3}^{|\alpha_{1}|+|\delta_{1}|}\,
   \Gamma\bigl(\mu|\gamma|+\mu|\delta_{1}|\bigr)\,
   \Gamma\bigl(\mu|\alpha-\alpha_{1}|+\sigma|\gamma|+\lambda|\delta-\delta_{1}|
   +\sigma l+q\bigr).
\end{aligned}
\]
Grouping terms and summing over all $k$ with $|k|=m$ gives
\[
\begin{aligned}
\bigl|D_{I}^{\alpha}D_{\varphi}^{\gamma}D_{t}^{\delta}u(\varphi,I;t)\bigr|
&\le d_{0}C_{0}C^{1+l}
   \sum_{m=1}^{\infty}\sum_{|k|=m}|k|^{-l}\Delta(|k|)e^{-C^{-1}|k|^{1/\sigma}}\\
&\quad\times\sum_{0<\alpha_{1}\le\alpha}\sum_{0<\delta_{1}\le\delta}
   C^{\mu|\alpha-\alpha_{1}|+|\gamma|+|\delta-\delta_{1}|}
   C_{0}^{|\alpha_{1}|+|\delta_{1}|} C_{3}^{|\alpha_{1}|+|\delta_{1}|}\\
&\quad\times\frac{\alpha!}{(\alpha-\alpha_{1})!}
   \frac{\delta!}{(\delta-\delta_{1})!}
   \Gamma\bigl(\mu|\alpha_{1}|+\mu|\delta_{1}|\bigr)
   \Gamma\bigl(\mu|\alpha-\alpha_{1}|+\sigma |\gamma|+\lambda|\delta-\delta_{1}|+\sigma l+q\bigr).
\end{aligned}
\]
The number of lattice points $k$ with $|k|=m$ is bounded by $2^{n}m^{n-1}$.
Setting $\varepsilon=C^{-1}C_{0}C_{3}$ and $\varepsilon'=C^{-\mu}C_{0}C_{3}$ we obtain
\[
\begin{aligned}
\bigl|D_{I}^{\alpha}D_{\varphi}^{\gamma}D_{t}^{\delta}u(\varphi,I;t)\bigr|
&\le 2^{n}d_{0}C_{0}C^{1+l}
   \sum_{m=1}^{\infty}m^{n-1-l}\Delta(m)e^{-C^{-1}m^{1/\sigma}}\\
&\quad\times\sum_{0<\alpha_{1}\le\alpha}\varepsilon'^{\,|\alpha_{1}|}
   \sum_{0<\delta_{1}\le\delta}\varepsilon^{\,|\delta_{1}|}
   C^{\mu|\alpha|+|\gamma|+|\delta|}\,
   \Gamma\bigl(\mu|\alpha|+\sigma|\gamma|+\lambda|\delta|+\sigma l+q\bigr).
\end{aligned}
\]
For $\varepsilon,\varepsilon'$ chosen sufficiently small we have
$\sum_{0<\alpha_{1}}\varepsilon'^{\,|\alpha_{1}|}<1$ and $\sum_{0<\delta_{1}}\varepsilon^{\,|\delta_{1}|}<1$.

Applying Lemma~\ref{a7} with $s=n+1-l>0$ and $\eta=C^{-1}>0$ yields
\[
\sup_{Q\ge0}Q^{\,s}\Delta(Q)e^{-C^{-1}Q^{1/\sigma}}
   \le\Gamma_{s}(\eta)\le e^{\eta T^{1/\mu}}.
\]
Since $\sum_{m=1}^{\infty}m^{-2}=\pi^{2}/6$, we finally obtain
\[
\bigl|D_{I}^{\alpha}D_{\varphi}^{\gamma}D_{t}^{\delta}u(\varphi,I;t)\bigr|
   \le d_{0}\mathcal D\,C^{\mu|\alpha|+|\gamma|+|\delta|}
   \Gamma\bigl(\mu|\alpha|+\sigma|\gamma|+\lambda|\delta|+\sigma l+q\bigr),
\]
where
\[
\mathcal D=\frac{\pi^{2}}{6}\,2^{n}\,C_{0}\,C^{1+l}e^{\eta T^{1/\mu}},
\]
and $\eta$ depends linearly on $s$. This completes the proof.
\end{proof}

\section{Proof of Theorem \ref{ad}} \label{sec5}
In this section we first state the relations among the indices
\(\mu,\sigma,\lambda,\varrho\) and \(l\):
\[
\lambda\ge\mu=\rho,\qquad l>\mu,\qquad
\varrho= \sigma l .
\]

We seek symbols \(a\) and \(p^{0}\) in the class
\(\mathcal{S}_{l}\bigl(\mathbb{T}^{n}\times E_{\kappa}(t)\times(-\tfrac12,\tfrac12)\bigr)\)
with asymptotic expansions of the form
\[
a\sim\sum_{j=0}^{\infty}a_{j}(\varphi,I;t)h^{j},\qquad
p^{0}\sim\sum_{j=0}^{\infty}p_{j}^{0}(I;t)h^{j},
\]
where \(a_{j}\in C^{\infty}\bigl(\mathbb{T}^{n}\times E_{\kappa}(t)\times(-\tfrac12,\tfrac12)\bigr)\)
and \(p_{j}^{0}\in C^{\infty}\bigl(E_{\kappa}(t)\times(-\tfrac12,\tfrac12)\bigr)\).

Consider the symbol
\[
c=p\circ a-a\circ p^{0}\sim\sum_{j=0}^{\infty}c_{j}(\varphi,I;t)h^{j}.
\]
From the previous discussion we know
\[
a_{0}(\varphi,I;t)=1,\qquad
p_{0}^{0}(I;t)=p_{0}(I;t)=K_{0}(I;t),
\]
and \(p_{1}^{0}(I;t)=p_{1}(I;t)\), which vanishes on
\(E_{\kappa}(t)\times(-\tfrac12,\tfrac12)\). Consequently
\(c_{0}=c_{1}=0\). For any \(j\ge 2\), the symbolic calculus gives
\begin{equation}\label{d}
c_{j}(\varphi,I;t)=\frac{1}{{\rm i}}(\mathcal{L}_{\omega}a_{j-1})(\varphi,I;t)
+p_{j}(\varphi,I;t)-p_{j}^{0}(I;t)+F_{j}(\varphi,I;t).
\end{equation}
One checks directly that \(F_{2}(\varphi,I;t)=0\), while for \(j\ge 3\)
\[
F_{j}(\varphi,I;t)=F_{j1}(\varphi,I;t)-F_{j2}(\varphi,I;t),
\]
with
\[
\begin{aligned}
F_{j1}(\varphi,I;t)&=
\sum_{s=1}^{j-2}\sum_{r+|\gamma|=j-s}
\frac{1}{\gamma!}\partial_{I}^{\gamma}p_{r}(\varphi,I;t)\,
\partial_{\varphi}^{\gamma}a_{s}(\varphi,I;t),\\[2mm]
F_{j2}(\varphi,I;t)&=
\sum_{s=1}^{j-2}a_{s}(\varphi,I;t)\,p_{j-s}^{0}(I;t).
\end{aligned}
\]
We solve the equations \(c_{j}=0\) for \(j\ge 2\) as follows.
First define
\begin{equation}\label{m}
p_{j}^{0}(I;t)=(2\pi)^{-n}\int_{\mathbb{T}^{n}}
\bigl(p_{j}(\varphi,I;t)+F_{j}(\varphi,I;t)\bigr)\,{\rm d}\varphi,
\end{equation}
and set
\[
f_{j}(\varphi,I;t)=p_{j}^{0}(I;t)-p_{j}(\varphi,I;t)-F_{j}(\varphi,I;t).
\]
Then the homological equation \eqref{d} becomes
\begin{equation}\label{e}
\frac{1}{{\rm i}}(\mathcal{L}_{\omega}a_{j-1})(\varphi,I;t)=f_{j}(\varphi,I;t),
\end{equation}
together with the normalisation condition
\begin{equation}\label{aw}
\int_{\mathbb{T}^{n}}a_{j-1}(\varphi,I;t)\, {\rm d}\varphi=0.
\end{equation}

We now solve equation \eqref{e} under the normalisation condition \eqref{aw} by repeated application of Proposition \ref{av}.

For \(j=2\), we obtain
\[
p_{2}^{0}(I;t)=(2\pi)^{-n}\int_{\mathbb{T}^{n}}p_{2}(\varphi,I;t)\, {\rm d}\varphi,
\]
and the system
\begin{equation}\label{l}
\frac{1}{{\rm i}}(\mathcal{L}_{\omega}a_{1})(\varphi, I;t)=p_{2}^{0}(I;t)-p_{2}(\varphi,I;t),\qquad
\int_{\mathbb{T}^{n}}a_{1}(\varphi,I;t)\,{\rm d}\varphi=0.
\end{equation}
We assume that the symbols \(p_{j}\) (\(j\in\mathbb{Z}_{+}\)) satisfy the estimate
\begin{align}\label{o}
\bigl|\partial_{I}^{\alpha}\partial_{\varphi}^{\beta}\partial_{t}^{\delta}p_{j}(\varphi,I;t)\bigr|
&\leq C_{0}^{\,j+|\alpha|+|\beta|+|\delta|+1}
\,\alpha!^{\mu}\,\beta!^{\sigma}\,\delta!^{\lambda}\,(j!)^{\mu+\sigma+\lambda-1}\\
&\leq C_{0}^{\,j+|\alpha|+|\beta|+|\delta|+1}
\,\alpha!\,\beta!\,\delta!\,
\Gamma_{+}\!\bigl((\mu-1)|\alpha|+(\sigma-1)|\beta|
+(\lambda-1)|\delta| \notag\\
&\qquad\qquad\qquad\qquad\qquad\qquad\qquad
+(\sigma+\mu+\lambda-1)(j-1)\bigr),\notag
\end{align}
for all multi-indices \(\alpha,\beta\in\mathbb{Z}_{+}^{n}\) and all \(\delta,j\in\mathbb{Z}_{+}\).
Here \(\Gamma_{+}(x)=\Gamma(x)\) for \(x\geq1\) and \(\Gamma_{+}(x)=1\) for \(x\leq1\).

Applying Proposition \ref{av} to \eqref{l} yields a solution \(a_{1}\) that fulfills
\[
\bigl|\partial_{I}^{\alpha}\partial_{\varphi}^{\beta}\partial_{t}^{\delta}a_{1}(\varphi,I;t)\bigr|
\leq 2C_{0} \mathcal D\,C^{\mu|\alpha|+|\beta|+|\delta|}
\Gamma\bigl(\mu|\alpha|+\sigma|\beta|+\lambda|\delta|+\varrho\bigr).
\]
Now fix \(j\geq3\) and suppose that we have already constructed functions
\(a_{k}(\varphi,I;t)\) for \(1\leq k\leq j-2\) satisfying \eqref{e} and the bounds
\begin{equation}\label{a2}
\bigl|\partial_{I}^{\alpha}\partial_{\varphi}^{\beta}\partial_{t}^{\delta}a_{k}(\varphi,I;t)\bigr|
\leq d^{\,k}C^{\mu|\alpha|+|\beta|+|\delta|}
\Gamma\bigl(\mu|\alpha|+\sigma|\beta|+\lambda|\delta|+k\varrho\bigr),
\qquad 1\leq k\leq j-2,
\end{equation}
for all \((\varphi,I)\in\mathbb{T}^{n}\times E_{\kappa}(t)\), where the constant
\(d\) is chosen so that \(d\geq 2C_{0} \mathcal D\).

We shall next establish several auxiliary estimates required for the induction step.

\begin{lemma}\label{af}
Assume that \(C \geq 4C_{0}\). Then for any multi-indices \(\alpha, \beta \in \mathbb{Z}_{+}^{n}\) and any nonnegative integer \(\delta\), the following estimate holds for all \((\varphi, I) \in \mathbb{T}^{n} \times E_{\kappa}(t)\):
\[
\bigl| D_{I}^{\alpha} D_{\varphi}^{\beta} D_{t}^{\delta} F_{j1}(\varphi, I; t) \bigr|
\leq Q_{1} d^{\,j-2} C^{\mu|\alpha| + |\beta| + |\delta|}
\Gamma\bigl( \mu|\alpha| + \sigma|\beta| + \lambda|\delta| + (j-1)\varrho \bigr),
\]
where
\[
Q_{1} = \frac{R_{0}C_{0}}{\delta} \sum_{p \in \mathbb{Z}_{+}^{n+1}}
\frac{C^{2|p|+2}}{\Gamma\bigl(\delta(|p|+1)\bigr)} < \infty.
\]
\end{lemma}

\begin{proof}
For notational convenience, define
\begin{align}\label{a8}
S_{r,s,\gamma}(\varphi,I;t)=\frac{1}{\gamma!}\,
\partial_{I}^{\gamma}p_{r}(\varphi,I;t)\;
\partial_{\varphi}^{\gamma}a_{s}(\varphi,I;t),
\end{align}
where the indices satisfy
\begin{equation}\label{bj}
3\le r+s+|\gamma|=j,\qquad 1\le s\le j-2 .
\end{equation}
From \eqref{a8} together with the bounds \eqref{o} and \eqref{a2}, we obtain
\begin{align*}
\bigl| \partial_{I}^{\alpha}\partial_{\varphi}^{\beta}\partial_{t}^{\delta}
      S_{r,s,\gamma}(\varphi,I;t) \bigr|
&\le\sum_{\alpha_{1}\le\alpha}\sum_{\beta_{1}\le\beta}\sum_{\delta_{1}\le\delta}
   \frac{1}{\gamma!}\binom{\alpha}{\alpha_{1}}\binom{\beta}{\beta_{1}}
   \binom{\delta}{\delta_{1}} \\
&\qquad\times
   \bigl|\partial_{I}^{\gamma+\alpha_{1}}\partial_{\varphi}^{\beta_{1}}
   \partial_{t}^{\delta_{1}}p_{r}(\varphi,I;t)\bigr| \\
&\qquad\times
   \bigl|\partial_{I}^{\alpha-\alpha_{1}}\partial_{\varphi}^{\gamma+\beta-\beta_{1}}
   \partial_{t}^{\delta-\delta_{1}}a_{s}(\varphi,I;t)\bigr| \\[2mm]
&\le d^{\,s}\sum_{\alpha_{1}\le\alpha}\sum_{\beta_{1}\le\beta}
   \sum_{\delta_{1}\le\delta}
   \frac{(\gamma+\alpha_{1})!\,\beta_{1}!\,\delta_{1}!}{\gamma!}
   \binom{\alpha}{\alpha_{1}}\binom{\beta}{\beta_{1}}\binom{\delta}{\delta_{1}} \\
&\qquad\times
   \Gamma\bigl((\mu-1)|\gamma+\alpha_{1}|+(\sigma-1)|\beta_{1}|
   +(\lambda-1)|\delta_{1}|+(\mu+\sigma+\lambda-1)(r-1)\bigr) \\
&\qquad\times
   \Gamma\bigl(\mu|\alpha-\alpha_{1}|+\sigma|\gamma+\beta-\beta_{1}|
   +\lambda|\delta-\delta_{1}|+s\varrho\bigr) \\
&\qquad\times
   C_{0}^{|\gamma+\alpha_{1}|+|\beta_{1}|+|\delta_{1}|+r+1}\;
   C^{\mu|\alpha-\alpha_{1}|+|\gamma+\beta-\beta_{1}|+|\delta-\delta_{1}|}.
\end{align*}
Applying Lemma~\ref{bh} yields
\begin{align*}
\bigl| \partial_{I}^{\alpha}\partial_{\varphi}^{\beta}\partial_{t}^{\delta}
      S_{r,s,\gamma}(\varphi,I;t) \bigr|
&\le d^{\,s}C^{\mu|\alpha|+|\beta|+|\delta|}\,
   \Gamma\bigl(\mu|\alpha|+\sigma|\beta|+\lambda|\delta|
   +(\sigma+\mu+\lambda-1)(|\gamma|+r-1)+s\varrho\bigr)\\
&\quad\times\sum_{\alpha_{1}\le\alpha}\sum_{\beta_{1}\le\beta}
   \sum_{\delta_{1}\le\delta}
   (2C_{0}/C)^{|\alpha_{1}+\beta_{1}|+|\delta_{1}|}\,
   (2C)^{|\gamma|}\,C_{0}^{|\gamma|+r+1}.
\end{align*}
Set \(\delta_0=\varrho-\sigma-\mu-\lambda+1\).  Because
\(\varrho=\sigma l, l>\mu\), we have
\[
\delta_0>\sigma\mu-\mu-\sigma+1=(1-\sigma)(1-\mu)>0 .
\]
Moreover, note the elementary inequality
\begin{equation}\label{bm}
(\sigma+\mu+\lambda-1)(|\gamma|+r-1)\ge\sigma+\mu+\lambda-1\ge1 .
\end{equation}
Consequently,
\begin{align*}
(j-1)\varrho-\delta_0(|\gamma|+r-1)
&= (|\gamma|+r-1+s)\varrho-\delta_0(|\gamma|+r-1) \\
&= (\sigma+\mu+\lambda-1)(|\gamma|+r-1)+s\varrho\ge1,
\end{align*}
since \(r+s+|\gamma|=j\) and \(|\gamma|+r\ge2\).

Using Lemma~\ref{bi} we obtain
\begin{align*}
&\Gamma\!\bigl(\mu|\alpha|+\sigma|\beta|+\lambda|\delta|
   +(j-1)\varrho-\delta_0(|\gamma|+r-1)\bigr)\,
   \Gamma\!\bigl(\delta_0(|\gamma|+r-1)\bigr)\\
&\qquad\le\frac{1}{\delta_0(|\gamma|+r-1)}\,
       \Gamma\!\bigl(\mu|\alpha|+\sigma|\beta|+\lambda|\delta|
       +(j-1)\varrho\bigr).
\end{align*}
Hence
\begin{align*}
&\Gamma\!\bigl(\mu|\alpha|+\sigma|\beta|+\lambda|\delta|
   +(\sigma+\mu+\lambda-1)(|\gamma|+r-1)+s\varrho\bigr)\\
&\qquad=\Gamma\!\bigl(\mu|\alpha|+\sigma|\beta|+\lambda|\delta|
   +(j-1)\varrho-\delta_0(|\gamma|+r-1)\bigr)\\
&\qquad\le\frac{1}{\delta_0}\,
       \Gamma\!\bigl(\mu|\alpha|+\sigma|\beta|+\lambda|\delta|
       +(j-1)\varrho\bigr)\,
       \bigl(\Gamma(\delta_0(|\gamma|+r-1))\bigr)^{-1}.
\end{align*}

Now assume \(C>4C_{0}\).  For any admissible triple \((r,s,\gamma)\) satisfying
\eqref{bj} we therefore obtain
\begin{align*}
\bigl| \partial_{I}^{\alpha}\partial_{\varphi}^{\beta}\partial_{t}^{\delta}
      S_{r,s,\gamma}(\varphi,I;t) \bigr|
&\le Q_{0}\,d^{\,j-2}C^{\mu|\alpha|+|\beta|+|\delta|}
   \,C_{0}\,C^{2(|\gamma|+r)}\\
&\quad\times\Gamma\!\bigl(\mu|\alpha|+\sigma|\beta|+\lambda|\delta|
   +(j-1)\varrho\bigr)
   \bigl(\delta_0\,\Gamma(\delta_0(|\gamma|+r-1))\bigr)^{-1},
\end{align*}
where \(Q_{0}^{1/2}=\sum_{\alpha\in\mathbb{Z}_{+}^{n+1}}2^{-|\alpha|}\).

Finally, because
\[
F_{j1}(\varphi,I;t)=\sum_{s=1}^{j-2}\;
\sum_{\substack{r+|\gamma|=j-s\\ r\ge1,\;\gamma\ge0}}
\frac{1}{\gamma!}\,
\partial_{I}^{\gamma}p_{r}(\varphi,I;t)\;
\partial_{\varphi}^{\gamma}a_{s}(\varphi,I;t),
\]
we conclude that
\begin{align*}
\bigl| D_{I}^{\alpha}D_{\varphi}^{\beta}D_{t}^{\delta}
      F_{j1}(\varphi,I;t) \bigr|
&\le\sum_{s=1}^{j-2}\;
   \sum_{\substack{r+|\gamma|=j-s\\ r\ge1,\;\gamma\ge0}}
   \bigl| D_{I}^{\alpha}D_{\varphi}^{\beta}D_{t}^{\delta}
         S_{r,s,\gamma}(\varphi,I;t) \bigr|\\
&\le Q_{1}\,d^{\,j-2}C^{\mu|\alpha|+|\beta|+|\delta|}\,
   \Gamma\!\bigl(\mu|\alpha|+\sigma|\beta|+\lambda|\delta|
   +(j-1)\varrho\bigr),
\end{align*}
with
\[
Q_{1}= \frac{Q_{0}C_{0}}{\delta_0}
       \sum_{p\in\mathbb{Z}_{+}^{n+1}}
       \frac{C^{2(|p|+1)}}{\Gamma(\delta_0(|p|+1))}
       \;<\;\infty .
\]
This completes the proof of Lemma~\ref{af}.

\end{proof}
We now prove, based on Lemma~\ref{af}, the existence of functions
$p_{k}^{0}(I;t)$ for $2 \leq k \leq j-1$ satisfying~\eqref{m} such that
\begin{align}\label{n}
\bigl|\partial_{I}^{\alpha}\partial_{t}^{\delta}p_{k}^{0}(I;t)\bigr|
&\leq d^{\,k-3/2}\, C^{\mu|\alpha|+|\delta|}\,
\Gamma\!\bigl(\mu|\alpha|+\lambda|\delta|+(k-1)\varrho\bigr),
\qquad 2\leq k\leq j-1 .
\end{align}
Given the recursive definition~\eqref{m}, the estimate for
$p_{j}^{0}(I;t)$ with $j\geq 3$ follows from the known estimates for
$p_{j}(\varphi,I;t)$ and $F_{j}(\varphi,I;t)$. Note that~\eqref{e} implies
\[
\int_{\mathbb{T}^{n}} F_{j2}(\varphi,I;t)\,\mathrm{d}\varphi
= \sum_{s=1}^{j-2} p_{j}^{0}(I;t) \int_{\mathbb{T}^{n}}
a_{s}(\varphi,I;t)\,\mathrm{d}\varphi = 0,
\]
from which we obtain
\[
p_{j}^{0}(I;t) = (2\pi)^{-n} \int_{\mathbb{T}^{n}}
\bigl(p_{j}(\varphi,I;t)+F_{j1}(\varphi,I;t)\bigr)\,\mathrm{d}\varphi .
\]
Applying the bounds from~\eqref{o} and Lemma~\ref{af} yields, for any
$j \geq 2$, the estimate
\begin{align}\label{p}
\bigl|\partial_{I}^{\alpha}\partial_{t}^{\delta}p_{k}^{0}(I;t)\bigr|
&\leq Q_{1}\, d^{\,k-2}\, C^{\mu|\alpha|+|\delta|}\,
\Gamma\!\bigl(\mu|\alpha|+\lambda|\delta|+(k-1)\varrho\bigr) \notag \\
&\qquad + C_{0}^{\,k+|\alpha|+|\delta|+1}\,
\Gamma_{+}\!\bigl(\mu|\alpha|+\lambda|\delta|
+(\mu+\lambda-1)(k-1)\bigr) \notag\\[2mm]
&\leq d^{\,k-3/2}\, C^{\mu|\alpha|+|\delta|}\,
\Gamma\!\bigl(\mu|\alpha|+\lambda|\delta|+(k-1)\varrho\bigr).
\end{align}
The final inequality uses the condition $\varrho > \mu+\sigma+\lambda-1$.
Here the constant $d$ is taken sufficiently large, depending only on
$n$, $\mu$, $\sigma$, $C_{0}$, and $C$. This completes the inductive
proof of~\eqref{n}.

\begin{lemma}\label{ag}
For all multi-indices \(\alpha, \beta \in \mathbb{Z}_{+}^{n}\) and every \(\delta \in \mathbb{Z}_{+}\), the function \(F_{j2}\) satisfies the estimate
\begin{align*}
\bigl|D_{I}^{\alpha}D_{\varphi}^{\beta}D_{t}^{\delta}F_{j2}(\varphi,I;t)\bigr|
&\leq M_{2}\,d^{\,j-3/2}\,C^{\mu|\alpha|+|\beta|+|\delta|}
      \,\Gamma\bigl(\mu|\alpha|+\sigma|\beta|+\lambda|\delta|+(j-1)\varrho\bigr),\\
&\qquad\qquad (\varphi,I)\in \mathbb{T}^{n}\times E_{\kappa}(t),
\end{align*}
where
\[
M_{2}=2M\Bigl(\sum_{\alpha_{1}\in\mathbb{Z}^{n}_{+}}2^{-|\alpha_{1}|/6}\Bigr)
      \Bigl(\sum_{\delta_{1}\in\mathbb{Z}_{+}}2^{-|\delta_{1}|/6}\Bigr)
      \sum_{s=1}^{j-2}\binom{j-2}{s-1}^{-1/3}.
\]

\end{lemma}
\begin{proof}
From \eqref{n} and \eqref{a2} we obtain, for \(1\leq s\leq j-2\),
\begin{align*}
&\bigl|D_{I}^{\alpha}D_{\varphi}^{\beta}D_{t}^{\delta}
      \bigl(a_{s}(\varphi,I;t)p_{j-s}^{0}(I;t)\bigr)\bigr| \\
&\leq\sum_{\alpha_{1}\leq\alpha}\sum_{\delta_{1}\leq\delta}
      \binom{\alpha}{\alpha_{1}}\binom{\delta}{\delta_{1}}
      \bigl|D_{I}^{\alpha_{1}}D_{\varphi}^{\beta}D_{t}^{\delta_{1}}
      a_{s}(\varphi,I;t)\bigr|\,
      \bigl|D_{I}^{\alpha-\alpha_{1}}D_{t}^{\delta-\delta_{1}}
      p_{j-s}^{0}(I;t)\bigr| \\
&\leq d^{\,j-3/2}C^{\mu|\alpha|+|\beta|+|\delta|}
      \sum_{\alpha_{1}\leq\alpha}\sum_{\delta_{1}\leq\delta}
      \binom{\alpha}{\alpha_{1}}\binom{\delta}{\delta_{1}}
      \Gamma\bigl(\mu|\alpha_{1}|+\sigma|\beta|+\lambda|\delta_{1}|+s\varrho\bigr) \\
&\qquad\qquad\qquad\qquad\qquad\qquad\times
      \Gamma\bigl(\mu|\alpha-\alpha_{1}|+\lambda|\delta-\delta_{1}|
      +(j-s-1)\varrho\bigr).
\end{align*}
Since Lemma~\ref{bw} together with the inequalities
\[
B\bigl(s\varrho,\,(j-s-1)\varrho\bigr)
   <B(s,j-s-1)<\binom{j-2}{s-1}^{-1}
\]
implies
\begin{align*}
&\bigl|D_{I}^{\alpha}D_{\varphi}^{\beta}D_{t}^{\delta}
      \bigl(a_{s}(\varphi,I;t)p_{j-s}^{0}(I;t)\bigr)\bigr| \\
&\leq M\,d^{\,j-3/2}C^{\mu|\alpha|+|\beta|+|\delta|}
      \sum_{\alpha_{1}\leq\alpha}\sum_{\delta_{1}\leq\delta}
      \binom{|\alpha|}{|\alpha_{1}|}^{-1/6}
      \binom{|\delta|}{|\delta_{1}|}^{-1/6}
      B\bigl(s\varrho,\,(j-s-1)\varrho\bigr)^{1/3} \\
&\qquad\qquad\qquad\qquad\qquad\qquad\times
      \Gamma\bigl(\mu|\alpha|+\sigma|\beta|+\lambda|\delta|+(j-1)\varrho\bigr) \\
&\leq M_{1}\,d^{\,j-3/2}C^{\mu|\alpha|+|\beta|+|\delta|}
      \binom{j-2}{s-1}^{-1/3}
      \Gamma\bigl(\mu|\alpha|+\sigma|\beta|+\lambda|\delta|+(j-1)\varrho\bigr),
\end{align*}
where
\(M_{1}=2M\bigl(\sum_{\alpha_{1}\in\mathbb{Z}_{+}^{n}}2^{-|\alpha_{1}|/6}\bigr)
        \bigl(\sum_{\delta_{1}\in\mathbb{Z}_{+}}2^{-|\delta_{1}|/6}\bigr)\).

Observing that
\[
\sum_{s=1}^{j-2}\binom{j-2}{s-1}^{-1/3}
   \leq 2\sum_{p=0}^{\infty}2^{-p/3}<\infty,
\]
we sum over \(s=1,\dots,j-2\) to obtain
\[
\bigl|D_{I}^{\alpha}D_{\varphi}^{\beta}D_{t}^{\delta}F_{j2}(\varphi,I;t)\bigr|
\leq M_{2}\,d^{\,j-3/2}C^{\mu|\alpha|+|\beta|+|\delta|}
      \Gamma\bigl(\mu|\alpha|+\sigma|\beta|+\lambda|\delta|+(j-1)\varrho\bigr),
\]
which completes the proof.

\end{proof}
Finally, combining Lemma~\ref{af}, Lemma~\ref{ag}, and estimates~\eqref{o}
and~\eqref{p}, we obtain for every \(j \geq 3\) the following bound on the
right-hand side of the homological equation~\eqref{e}:
\begin{align*}
\bigl|D_{I}^{\alpha}D_{\varphi}^{\beta}D_{t}^{\delta} f_{j}(\varphi,I;t)\bigr|
&\leq M_{3} \, d^{\,j-2/3} \, C^{\mu|\alpha|+|\beta|+|\delta|}
      \Gamma\bigl(\mu|\alpha|+\sigma|\beta|+\lambda|\delta|+(j-1)\varrho\bigr), \\
&\qquad \forall\,\alpha,\beta\in\mathbb{Z}_{+}^{n},\; \delta\in\mathbb{Z}_{+},
\end{align*}
where \(M_{3}\) depends only on \(n\), \(\mu\), \(\sigma\), \(\lambda\), \(C_{0}\),
and \(C\).
Applying Proposition~\ref{av} then yields a solution \(a_{j-1}\) of
\eqref{e} and \eqref{aw} that satisfies \eqref{a2} for \(k=j-1\),
provided the parameter \(d\) is chosen sufficiently large.

\appendix  

\section{proof of the  estimate (\ref{c})}\label{A1}

From \eqref{f} and \eqref{bx}, the function $f_k(I;t)$ belongs to the Gevrey class
$\mathcal{G}^{\mu,\lambda}(E_{\kappa}(t)\times(-\tfrac{1}{2},\tfrac{1}{2}))$.
Using integration by parts and the results of \cite{MR1055164}, we obtain the estimate
\begin{align}\label{a11}
\bigl|\partial_{I}^{\alpha}\partial_{t}^{\delta}f_{k}(I;t)\bigr|
\leq d_{0}\,C^{\mu|\alpha|+|\delta|+N+1}\,
\Gamma\bigl(\mu|\alpha|+\lambda|\delta|+q\bigr)\,
N!^{\sigma}\,|k|^{-N}.
\end{align}
We minimize the right-hand side of \eqref{a11} with respect to $N\in\mathbb{N}_{+}$
by means of Stirling's formula.  The optimal choice of $N$ is approximately
\begin{align}\label{a10}
N\sim\Bigl(\frac{|k|}{C}\Bigr)^{\!1/\sigma},
\end{align}
which leads to
\begin{align*}
\bigl|\partial_{I}^{\alpha}\partial_{t}^{\delta}f_{k}(I;t)\bigr|
\leq d_{0}\,C^{\mu|\alpha|+|\delta|+1}\,
\Gamma\bigl(\mu|\alpha|+\lambda|\delta|+q\bigr)\,
\exp\!\bigl(-C^{-1}|k|^{1/\sigma}\bigr).
\end{align*}

To see why \eqref{a10} gives the optimum, consider the part of the bound in
\eqref{a11} that depends on $N$, namely
\[
f(N):=C^{N}N!^{\sigma}|k|^{-N}.
\]
Taking logarithms and applying Stirling's approximation
$\log N! = N\log N - N + O(\log N)$, we obtain
\begin{align*}
\log f(N) & = N\log C + \sigma\bigl(N\log N - N\bigr) - N\log|k| + O(\log N)\\
       &  = \sigma N\log N + N\bigl(\log C - \log|k| - \sigma\bigr) + O(\log N).
\end{align*}
Treating $N$ as a continuous variable and differentiating, we have
\[
\frac{\rm d}{{\rm d}N}\log f(N) \approx \sigma\log N + \sigma + \log C - \log|k| - \sigma
                         = \sigma\log N + \log C - \log|k|.
\]
Setting the derivative to zero yields $\sigma\log N \approx \log|k| - \log C$,
hence $N \approx (|k|/C)^{1/\sigma}$, which is precisely \eqref{a10}.

\section{Gevrey function class}\label{secB1}
This section defines the Gevrey function spaces used throughout the paper and recalls their basic properties.

For a parameter \(\mu \geq 1\), the Gevrey class of order \(\mu\) consists of smooth functions whose derivatives grow at most like \(s^{-|k|}k!^{\mu}\) for some \(s>0\); equivalently, their Fourier coefficients (when defined on a torus) decay as \(e^{-\mu s|k|^{1/\mu}}\). The case \(\mu = 1\) corresponds precisely to the space of real-analytic functions. These classes were introduced by M. Gevrey in the early twentieth century in the context of partial differential equations and have since found applications in control theory, mathematical physics, and signal processing.
\begin{definition}
Let \(\mu \geq 1\) and let \(X \subset \mathbb{R}^{n}\) be an open set. For a constant \(L>0\) we define the Gevrey space of order \(\mu\) and radius \(L\) as
\[
\mathcal{G}^{\mu,L}(X)=\Bigl\{ f\in C^{\infty}(X) :
\|f\|_{\mu,L}:=\sup_{\alpha\in\mathbb{N}^{n}}\sup_{x\in X}
|\partial_{x}^{\alpha}f(x)|\,L^{-|\alpha|}\,(\alpha!)^{-\mu}<\infty\Bigr\}.
\]
The Gevrey class of order \(\mu\) on \(X\) is then
\[
\mathcal{G}^{\mu}(X)=\bigcup_{L>0}\mathcal{G}^{\mu,L}(X).
\]
\end{definition}
An equivalent characterization, given in \cite{MR4050197}, is the following:
\begin{definition}
Let \(L>0\). A function \(H\) that is smooth on an open neighbourhood of
\(\mathbb{T}^{n}\times D\) is called a Gevrey-\((\alpha,L)\) function, denoted
\(H\in \mathcal{G}^{\alpha,L}(\mathbb{T}^{n}\times D)\), if there exists
\(s_{0}>0\) such that
\[
|H|_{\alpha,L}:=c\sup_{(\theta,I)\in\mathbb{T}^{n}\times D}
\Bigl(\sup_{k\in\mathbb{N}^{2n}}
\frac{(|k|+1)^{2}L^{\alpha|k|}\partial^{k}H(\theta,I)}{|k|!^{\alpha}}\Bigr)<\infty,
\qquad c:=\frac{4\pi^{2}}{3}.
\]
\end{definition}
The norm \(|\cdot|_{\alpha,L}\) makes \(\mathcal{G}^{\alpha,L}(\mathbb{T}^{n}\times D)\) a Banach algebra, a fact that is particularly convenient for nonlinear estimates. For the analysis in this paper we require anisotropic Gevrey classes, which we define as follows.

\begin{definition}\label{at}
Let \(\sigma,\mu\geq1\) and \(L_{1},L_{2}>0\). A function
\(f\in C^{\infty}(\mathbb{T}^{n}\times D)\) belongs to the anisotropic Gevrey class
\(\mathcal{G}^{\sigma,\mu}_{L_{1},L_{2}}(\mathbb{T}^{n}\times D)\) if
\[
\sup_{\alpha,\beta\in\mathbb{N}^{n}}
\sup_{(\varphi,I)\in\mathbb{T}^{n}\times D}
|\partial_{\varphi}^{\alpha}\partial_{I}^{\beta}f(\varphi,I)|\,
L_{1}^{-|\alpha|}L_{2}^{-|\beta|}\,
(\alpha!)^{-\sigma}(\beta!)^{-\mu} < \infty .
\]

\end{definition}
A similar result for three variables is recorded, for instance, in \cite{MR2104602}.
We adapt it to the anisotropic setting as follows.

\begin{definition}\label{a12}
Let \(\rho\geq 1\) and \(L_{1},L_{2}>0\). A function
\(f\in C^{\infty}\bigl(\mathbb{T}^{n}\times D\times (-\tfrac12,\tfrac12)\bigr)\) belongs to the anisotropic Gevrey class
\(\mathcal{G}^{\rho,\rho+1,\rho+1}_{L_{1},L_{2},L_{2}}\bigl(\mathbb{T}^{n}\times D\times (-\tfrac12,\tfrac12)\bigr)\) if
\[
\sup_{\alpha,\beta,\delta}\sup_{(\varphi,I,t)}
|\partial_{\varphi}^{\alpha}\partial_{I}^{\beta}\partial_{t}^{\delta}f(\varphi,I;t)|\,
L_{1}^{-|\alpha|}L_{2}^{-|\beta|-|\delta|}\,
(\alpha!)^{-\rho}(\beta!)^{-(\rho+1)}(\delta!)^{-(\rho+1)}<\infty.
\]

\end{definition}
\section{Gamma function}\label{secCb 1}
We recall some basic properties of the Gamma function, referring to \cite{MR1770800} for details.
For \(x>0\), the Gamma function is defined as
\[
\Gamma(x)=\int_{0}^{\infty}e^{-t}t^{\,x-1}\,\mathrm{d}t.
\]
It satisfies the identity
\begin{align}\label{bn}
\Gamma(x)\Gamma(y)=\Gamma(x+y)B(x,y),\qquad x,y>0,
\end{align}
where the Beta function is given by
\[
B(x,y)=\int_{0}^{1}(1-t)^{x-1}t^{y-1}\,\mathrm{d}t.
\]
Note that for any \(x\ge 1\) and \(y>0\) one has \(B(x,y)\le y^{-1}\).
Combining this bound with \eqref{bn} yields the following estimate.

\begin{lemma}\label{bi}
For all \(x\ge 1\) and \(y>0\),
\[
\Gamma(x)\Gamma(y)\le\frac{1}{y}\,\Gamma(x+y).
\]
\end{lemma}
For multi-indices \(x,y\in\mathbb{Z}_{+}^{n}\) with \(0\le y\le x\) we write
\[
\binom{x}{y}=\frac{x!}{y!\,(x-y)!},
\]
with the convention \(0!=1\).

\begin{lemma}\label{bh}

Let \(\alpha,\beta,\delta\in\mathbb{Z}_{+}^{n}\) be multi-indices, and let
\(\alpha_{1}\le\alpha\), \(\beta_{1}\le\beta\), \(\delta_{1}\le\delta\) and
\(\gamma\in\mathbb{Z}_{+}^{n}\). Assume further that \(s\geq1\), \(r\geq0\) and
that \(|\gamma|+r\geq2\). Then
\begin{multline*}
(\gamma+\alpha_{1})!\,
\frac{\beta_{1}!\,\delta_{1}!}{\gamma!}\,
\binom{\alpha}{\alpha_{1}}\binom{\beta}{\beta_{1}}\binom{\delta}{\delta_{1}}
\;\Gamma\!\bigl((\mu-1)|\gamma+\alpha_{1}|
      +(\sigma-1)|\beta_{1}|+(\lambda-1)|\delta_{1}|
      +(\sigma+\mu+\lambda-1)(r-1)\bigr)\\
\times\Gamma\!\bigl(\mu|\alpha-\alpha_{1}|+\sigma|\gamma+\beta-\beta_{1}|
      +\lambda|\delta-\delta_{1}|+s\bar{\rho}\bigr)\\[2mm]
\leq 2^{\,|\gamma+\alpha_{1}|}\,
\Gamma\!\bigl(\mu|\alpha|+\sigma|\beta|+\lambda|\delta|
      +(\sigma+\mu+\lambda-1)(|\gamma|+r-1)+s\bar{\rho}\bigr).
\end{multline*}

\end{lemma}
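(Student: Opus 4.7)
My plan is a four-step reduction: (i) peel off the factor $2^{|\gamma+\alpha_1|}$ using the elementary multi-index estimate $(\gamma+\alpha_1)! \leq 2^{|\gamma+\alpha_1|}\gamma!\,\alpha_1!$, which also cancels the $1/\gamma!$ on the left; (ii) combine the two $\Gamma$-factors into $\Gamma(A+B)$ via Lemma \ref{bi}, justified by $B \geq s\bar\rho \geq \mu+\sigma+\lambda-1 \geq 1$; (iii) absorb the residual combinatorial coefficient $\alpha_1!\,\beta_1!\,\delta_1!\,\binom{\alpha}{\alpha_1}\binom{\beta}{\beta_1}\binom{\delta}{\delta_1}$ into the argument of $\Gamma$; (iv) conclude by monotonicity of $\Gamma$ after comparing $A+B$ with the target argument $C$. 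Here $A$ and $B$ denote the arguments of the first and second $\Gamma$ on the LHS, and $C$ that on the RHS.

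For step (iii), I would first rewrite $\alpha_1!\binom{\alpha}{\alpha_1} = \alpha!/(\alpha-\alpha_1)!$ and analogously for the $\beta,\delta$ pairs, then pass to scalar norms via the Vandermonde-type inequality $\binom{\alpha}{\alpha_1} \leq \binom{|\alpha|}{|\alpha_1|}$, which follows from the identity $\binom{|\alpha|}{|\alpha_1|} = \sum_{\alpha_1'\leq\alpha,\,|\alpha_1'|=|\alpha_1|}\binom{\alpha}{\alpha_1'}$, combined with the multinomial bound $\alpha! \leq |\alpha|!$. The resulting scalar falling factorials are then absorbed into $\Gamma(A+B)$ unit by unit via $\Gamma(x+k)/\Gamma(x) = x(x+1)\cdots(x+k-1) \geq k!$ for integer $k \geq 0$ and $x \geq 1$. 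For step (iv), a direct expansion yields
\[
C - (A+B) = (\sigma+\lambda-\rho)|\gamma| + |\alpha_1| + (\rho-\sigma+1)|\beta_1| + |\delta_1|,
\]
which is nonnegative under the standing assumptions $\rho \geq \sigma$ and $\lambda \geq \rho+1$ recorded at the start of Section 5; this slack is precisely the budget consumed in step (iii).

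The main obstacle I anticipate is the combinatorial bookkeeping in step (iii): because $\mu,\sigma,\lambda,\rho$ enter the two $\Gamma$-arguments asymmetrically, the factorials $|\alpha_1|!$, $|\beta_1|!$, $|\delta_1|!$ must be matched term by term against the slack $|\alpha_1|+(\rho-\sigma+1)|\beta_1|+|\delta_1|$, while the residual $(\sigma+\lambda-\rho)|\gamma|$ absorbs the loss from $\alpha!\leq|\alpha|!$ and the Vandermonde passage. A minor technical point is that Lemma \ref{bi} requires one of its two arguments to exceed $1$, which is supplied here without any further hypothesis by $B \geq \bar{\rho}$; consequently step (ii) carries through in full generality, and the hypothesis $|\gamma|+r\geq 2$ enters only to ensure that the $C-(A+B)$ computation is not vacuous.
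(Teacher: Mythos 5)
Your structure (peel off $2^{|\gamma+\alpha_1|}$, combine the two Gammas by Lemma \ref{bi}, absorb the falling factorials, compare arguments) matches the paper's proof almost exactly; the only organizational difference is that the paper absorbs the falling factorials into the \emph{second} Gamma first and only then applies Lemma \ref{bi}, while you apply Lemma \ref{bi} first and absorb afterwards. Since both routes pass through the same intermediate quantity $\Gamma(A+B+|\alpha_1|+|\beta_1|+|\delta_1|)$ and use identical ingredients, this reordering is cosmetic.

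There is, however, a genuine gap in your step (ii) and in the final paragraph. You claim the product inequality $\Gamma(A)\Gamma(B)\leq\Gamma(A+B)$ is justified by $B\geq s\bar\rho\geq 1$ alone, and that the hypothesis $|\gamma|+r\geq 2$ ``enters only to ensure that the $C-(A+B)$ computation is not vacuous.'' Both claims are wrong. Lemma \ref{bi} as stated gives $\Gamma(x)\Gamma(y)\leq\frac{1}{y}\Gamma(x+y)$ for $x\geq 1$, $y>0$; applied with $x=B$, $y=A$ it yields $\Gamma(A)\Gamma(B)\leq A^{-1}\Gamma(A+B)$, and to drop the factor $A^{-1}$ one needs $A\geq 1$, not merely $A>0$. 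Indeed $\Gamma(A)\Gamma(B)\leq\Gamma(A+B)$ can fail when one argument is below $1$: $\Gamma(\tfrac12)\Gamma(2)=\sqrt\pi>\tfrac34\sqrt\pi=\Gamma(\tfrac52)$. Moreover, your own formula
\[
C-(A+B)=(\sigma+\lambda-\rho)|\gamma|+|\alpha_1|+(\rho-\sigma+1)|\beta_1|+|\delta_1|
\]
contains no $r$ at all --- the $r$-dependence cancels --- so the condition $|\gamma|+r\geq 2$ cannot be ``ensuring'' anything about that computation. Its actual role, which the paper makes explicit by invoking \eqref{bm} to argue $(\mu-1)|\gamma+\alpha_1|+(\mu+\sigma+\lambda-1)(r-1)>1$, is to guarantee that the first Gamma argument $A$ is at least $1$, so that Lemma \ref{bi} yields the product inequality. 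Without establishing $A\geq 1$, step (ii) does not go through, and your proof has a hole at exactly the point you characterized as trivial.
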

\begin{proof}
We first note the elementary identities
\( x\Gamma(x)=\Gamma(x+1) \) for \(x>0\) and the factorial estimate
\[
(|\alpha|+k)!\le 2^{\,|\alpha|+k}|\alpha|!\,k!.
\]

Applying these together with the definition of the multinomial coefficients, we obtain
\begin{align*}
&(\gamma+\alpha_{1})!\,
  \frac{\beta_{1}!\,\delta_{1}!}{\gamma!}\,
  \binom{\alpha}{\alpha_{1}}\binom{\beta}{\beta_{1}}\binom{\delta}{\delta_{1}}
  \,\Gamma\!\bigl(\mu|\alpha-\alpha_{1}|+\sigma|\gamma+\beta-\beta_{1}|
                +\lambda|\delta-\delta_{1}|+s\bar{\rho}\bigr)\\[1mm]
&\leq 2^{\,|\gamma+\alpha_{1}|}\,
      \frac{|\alpha|!}{|\alpha-\alpha_{1}|!}\,
      \frac{|\beta|!}{|\beta-\beta_{1}|!}\,
      \frac{|\delta|!}{|\delta-\delta_{1}|!}
      \,\Gamma\!\bigl(\mu|\alpha-\alpha_{1}|+\sigma|\beta-\beta_{1}|
                     +\sigma|\gamma|+\lambda|\delta-\delta_{1}|+s\bar{\rho}\bigr)\\[1mm]
&\leq 2^{\,|\gamma+\alpha_{1}|}\,
      \Gamma\!\bigl(|\alpha|+|\beta|+|\delta|
                     +(\mu-1)|\alpha-\alpha_{1}|+(\sigma-1)|\beta-\beta_{1}|
                     +\sigma|\gamma|+(\lambda-1)|\delta-\delta_{1}|+s\bar{\rho}\bigr)\\[1mm]
&= 2^{\,|\gamma+\alpha_{1}|}\,
   \Gamma\!\bigl(\mu|\alpha|+\sigma|\beta|+\lambda|\delta|
                 -(\mu-1)|\alpha_{1}|-(\sigma-1)|\beta_{1}|
                 -(\lambda-1)|\delta_{1}|+\sigma|\gamma|+s\bar{\rho}\bigr).
\end{align*}

On the other hand, since \(s\bar{\rho}>1\) and by \eqref{bm}, we have
\[
(\mu-1)|\gamma+\alpha_{1}|+(\mu+\sigma+\lambda-1)(r-1) > 1.
\]
Applying Lemma~\ref{bi} then yields
\begin{multline*}
(\gamma+\alpha_{1})!\,\frac{\beta_{1}!\,\delta_{1}!}{\gamma!}\,
\binom{\alpha}{\alpha_{1}}\binom{\beta}{\beta_{1}}\binom{\delta}{\delta_{1}}
\;\Gamma\!\bigl((\mu-1)|\gamma+\alpha_{1}|+(\sigma-1)|\beta_{1}|+(\lambda-1)|\delta_{1}|
      +(\sigma+\mu+\lambda-1)(r-1)\bigr)\\
\times\Gamma\!\bigl(\mu|\alpha-\alpha_{1}|+\sigma|\gamma+\beta-\beta_{1}|
      +\lambda|\delta-\delta_{1}|+s\bar{\rho}\bigr)\\
\leq 2^{\,|\gamma+\alpha_{1}|}\,
\Gamma\!\bigl(\mu|\alpha|+\sigma|\beta|+\lambda|\delta|
      +(\sigma+\mu+\lambda-1)(|\gamma|+r-1)+s\bar{\rho}\bigr),
\end{multline*}
provided that \(\rho\geq\sigma\). This completes the proof.

\end{proof}

\begin{lemma}\label{bw}
Assume \(\rho \geq 7\). Then there exists a constant \(M>0\) such that for all
non-negative integers \(x_1,y_1,x_2,y_2\) and all real numbers \(p\geq1,\;q\geq1\),
\begin{multline*}
\binom{x_1+y_1}{x_1}^{7/6}\binom{x_2+y_2}{x_2}^{7/6}
\;\Gamma\!\bigl(\mu x_1+\lambda x_2+p\bigr)\,
\Gamma\!\bigl(\mu y_1+\lambda y_2+q\bigr)\\[2mm]
\leq M\;\Gamma\!\bigl(\mu(x_1+y_1)+\lambda(x_2+y_2)+p+q\bigr)\,
B(p,q)^{1/3},
\end{multline*}
where \(B(p,q)\) denotes the Beta function.

\end{lemma}
\begin{proof}
For \(x_1,y_1,x_2,y_2\ge 1\), we start from the identity \eqref{bn} to obtain
\begin{align}\label{bo}
&\Gamma(\mu x_1+\lambda x_2+p)\,\Gamma(\mu y_1+\lambda y_2+q) \notag\\
&\qquad \le \Gamma\bigl(\mu(x_1+y_1)+\lambda(x_2+y_2)+p+q\bigr)\,
           B\bigl(\mu x_1+\lambda x_2+p,\;\mu y_1+\lambda y_2+q\bigr).
\end{align}
From the definition of the Beta function,
\[
B(\mu x_1+\lambda x_2+p,\mu y_1+\lambda y_2+q)
   =\int_0^1 t^{\mu x_1+\lambda x_2+p-1}(1-t)^{\mu y_1+\lambda y_2+q-1}\,\mathrm{d}t,
\]
we immediately have the three estimates
\begin{align}
B(\mu x_1+\lambda x_2+p,\mu y_1+\lambda y_2+q) &\le B(\mu x_1,\mu y_1),\label{bp}\\[1mm]
B(\mu x_1+\lambda x_2+p,\mu y_1+\lambda y_2+q) &\le B(\lambda x_2,\lambda y_2),\label{tt}\\[1mm]
B(\mu x_1+\lambda x_2+p,\mu y_1+\lambda y_2+q) &\le B(p,q).\label{bq}
\end{align}
Inserting \eqref{bp}--\eqref{bq} into \eqref{bo} gives
\begin{multline*}
\Gamma(\mu x_1+\lambda x_2+p)\,\Gamma(\mu y_1+\lambda y_2+q)\\
\le \Gamma\bigl(\mu(x_1+y_1)+\lambda(x_2+y_2)+p+q\bigr)\,
    B(\mu x_1,\mu y_1)^{1/3}B(\lambda x_2,\lambda y_2)^{1/3}B(p,q)^{1/3}.
\end{multline*}
Inserting \eqref{bp}--\eqref{bq} into \eqref{bo} gives
\begin{multline*}
\Gamma(\mu x_1+\lambda x_2+p)\,\Gamma(\mu y_1+\lambda y_2+q)\\
\le \Gamma\bigl(\mu(x_1+y_1)+\lambda(x_2+y_2)+p+q\bigr)\,
    B(\mu x_1,\mu y_1)^{1/3}B(\lambda x_2,\lambda y_2)^{1/3}B(p,q)^{1/3}.
\end{multline*}

It therefore suffices to prove that for a suitable constant \(M>0\),
\begin{align}\label{keyest}
B(\mu x_1,\mu y_1) &\le M^{1/2}\binom{x_1+y_1}{x_1}^{-7/2},\\[1mm]
B(\lambda x_2,\lambda y_2) &\le M^{1/2}\binom{x_2+y_2}{x_2}^{-7/2}.
\end{align}
We only show the first inequality; the second is completely analogous.

By Stirling's formula there exists a constant \(L>1\) such that for every integer \(n\ge 1\),
\[
L^{-1}\le \Gamma(n)\,(2\pi)^{-1/2}n^{\,1/2-n}e^{\,n}\le L.
\]
Applying this to \(\Gamma(\mu x_1)\), \(\Gamma(\mu y_1)\) and \(\Gamma(\mu(x_1+y_1))\) yields
\begin{align*}
\Gamma(\mu x_1) &\le L^{\mu+1}\,\Gamma(x_1)^{\mu}
                 \Bigl(\frac{x_1}{2\pi}\Bigr)^{\!\frac{\mu-1}{2}}
                 \mu^{\mu x_1-1/2},\\[1mm]
\Gamma(\mu y_1) &\le L^{\mu+1}\,\Gamma(y_1)^{\mu}
                 \Bigl(\frac{y_1}{2\pi}\Bigr)^{\!\frac{\mu-1}{2}}
                 \mu^{\mu y_1-1/2},\\[1mm]
\Gamma(\mu(x_1+y_1))^{-1}&\le L^{\mu+1}\,\Gamma(x_1+y_1)^{-\mu}
                 \Bigl(\frac{x_1+y_1}{2\pi}\Bigr)^{\!\frac{1-\mu}{2}}
                 \mu^{\,1/2-\mu(x_1+y_1)}.
\end{align*}
Using again \eqref{bn} we obtain
\begin{align*}
B(\mu x_1,\mu y_1)
&=\frac{\Gamma(\mu x_1)\Gamma(\mu y_1)}{\Gamma(\mu(x_1+y_1))}\\[1mm]
&\le L^{3(\mu+1)}
   \Bigl(\frac{1}{2\pi}\Bigr)^{\!\frac{1-\mu}{2}}
   B(x_1,y_1)^{\mu}
   \Bigl(\frac{x_1+y_1}{x_1y_1}\Bigr)^{\!\frac{1-\mu}{2}}
   \mu^{-1/2}\\[1mm]
&\le M^{1/2}
   \Bigl(\frac{x_1y_1}{x_1+y_1}\Bigr)^{\!\frac{\mu-1}{2}}
   B(x_1,y_1)^{\frac{\mu-1}{2}}
   = M^{1/2}\binom{x_1+y_1}{x_1}^{\!-\frac{\mu-1}{2}},
\end{align*}
with a constant \(M\) depending only on \(\mu\) and \(L\). Because \(\mu\ge 8\) (which follows from \(\rho\ge 7\) and the definition of \(\mu\) in the context), we have \(-\frac{\mu-1}{2}\le -\frac{7}{2}\); hence
\[
B(\mu x_1,\mu y_1)\le M^{1/2}\binom{x_1+y_1}{x_1}^{-7/2}.
\]
The same argument with \(\lambda\) in place of \(\mu\) gives the second inequality in \eqref{keyest}, using that \(\rho\ge 7\) implies the required lower bound on \(\lambda\).

Finally, if one of the integers \(x_i\) or \(y_i\) equals zero, the estimate simplifies. For instance, if \(x_1=0\) and \(y_1\ge 0\), then
\[
\Gamma(p)\Gamma(\mu y_1+\lambda y_2+q)
   =\Gamma(\mu y_1+\lambda y_2+p+q)\,
     B(p,\mu y_1+\lambda y_2+q)
   \le \Gamma(\mu y_1+\lambda y_2+p+q)\,B(p,q)^{1/3},
\]
because \(B(p,\mu y_1+\lambda y_2+q)\le B(p,q)\). The other cases are handled identically. This completes the proof of the lemma.

\end{proof}

\end{document}